\def\bbl@set@language#1{%
  \edef\languagename{%
    \ifnum\escapechar=\expandafter`\string#1\@empty
    \else\string#1\@empty\fi}%
  \@ifundefined{babel@language@alias@\languagename}{}{%
    \edef\languagename{\@nameuse{babel@language@alias@\languagename}}%
  }%
  \select@language{\languagename}%
  \expandafter\ifx\csname date\languagename\endcsname\relax\else
    \if@filesw
      \protected@write\@auxout{}{\string\select@language{\languagename}}%
      \bbl@for\bbl@tempa\BabelContentsFiles{%
        \addtocontents{\bbl@tempa}{\xstring\select@language{\languagename}}}%
      \bbl@usehooks{write}{}%
    \fi
  \fi}
\newcommand{\DeclareLanguageAlias}[2]{%
  \global\@namedef{babel@language@alias@#1}{#2}%
}
\newtheorem{theorem}{Theorem}
\newtheorem{definition}{Definition}
\newtheorem{lemma}{Lemma}
\newenvironment{proof*}[1][\proofname]{%
  
  \begin{proof}[#1]}{\end{proof}}
\renewcommand{\epsilon}{\varepsilon}
\newtcolorbox[auto counter]{example}{enhanced,fonttitle=\sffamily\bfseries\large,valign=center
,drop fuzzy shadow,title={Example \thetcbcounter},breakable}
\DeclareFontFamily{U}{bbold}{}
\DeclareFontShape{U}{bbold}{m}{n}
 {
  <-5.5> s*[1.069] bbold5
  <5.5-6.5> s*[1.069] bbold6
  <6.5-7.5> s*[1.069] bbold7
  <7.5-8.5> s*[1.069] bbold8
  <8.5-9.5> s*[1.069] bbold9
  <9.5-11> s*[1.069] bbold10
  <11-15> s*[1.069] bbold12
  <15-> s*[1.069] bbold17
 }{}
\DeclareRobustCommand{\identity}{%
  \text{\usefont{U}{bbold}{m}{n}1}%
}
\begin{document}

\title{Demystifying the Balanced Product Code: A Review}
\date{\today}
\author{Heather \surname{Leitch}}
\author{Alastair \surname{Kay}}
\email{alastair.kay@rhul.ac.uk}
\affiliation{Royal Holloway University of London, Egham, Surrey, TW20 0EX, UK}

\begin{abstract}
The discovery of the family of balanced product codes \cite{breuckmann2021} was pivotal in the subsequent development of `good' low density quantum error correcting codes that have optimal scaling of the key parameters of distance and storage density. We review this family, giving a completely different presentation to the original, minimising the abstraction and technicalities wherever possible. The target audience is anyone familiar with the stabilizer formalism for error correction wanting to understand how parity-check matrices can be constructed for high storage density quantum codes.
\end{abstract}
\maketitle

Quantum computing sits at an inspiring threshold. There are exciting algorithms that we want to implement, such as Shor's algorithm \cite{shor1997a}, Grover's search \cite{grover1996}, linear inversion \cite{harrow2009} and Hamiltonian simulation/quantum chemistry \cite{berry2015a}, which will out-perform the best-known subroutines on a classical computer. We can produce small-scale devices that promise quantum advantage \cite{arute2019,zhong2020,madsen2022}, and error correction is reaching the break-even point where it should be scalable \cite{sivak2023}. This scaling up of error correction is the most significant barrier to achieving a universal fault-tolerant quantum computer that will implement all these amazing algorithms.

Quantum error correction, however, is not only an engineering problem. It is a challenge to theorists to create high performance codes. An $[[n,k,d]]$ code is specified by 3 key properties: $n$, the number of physical qubits that encode the $k$ logical qubits, and the distance $d$, which indicates the number of errors that can be tolerated. Small-scale codes work remarkably well. For example, the 5-qubit code \cite{laflamme1996} utilises the state space so well, that it is said to be perfect. Steane's 7-qubit code \cite{steane1996a} similarly is perfect within the restrictions of CSS codes, while having a near-optimal set of transversal logical gates. However, just as with classical error correction, where scaling up the repetition code to large systems fails to pack data as densely as we otherwise might, encoding many logical qubits as densely as possible is far from trivial.

We can assess the storage density by evaluating $kd^2$. Since neither can be larger than $\order{n}$, $kd^2\sim n^3$ at best and any code saturating this is said to be `good'. Contrast this with the performance of many copies of the Steane code: an $[[n,n/7,3]]$ code with $kd^2\sim n$; far from good, but optimal performance for $k$ is easy to achieve.

The most commonly considered methods for scaling an error correcting code to large systems are concatenated codes \cite{aliferis2006} and surface codes, such as the Toric code \cite{kitaev2003}. The $[[2d^2,2,d]]$ Toric code achieves large distance, which gives us the ability to adapt to a broad range of scenarios, but only has $kd^2\sim n$. Meanwhile, a concatenated Steane code with $D$ levels of concatenation, is a single code $[[7^D,1,3^D]]$. Again, we can control the distance, but only achieve $kd^2\sim n^{2\log_73}\approx n^{1.13}$. It should be noted that there are more sophisticated ways to perform concatenation which achieve much better performance \cite{yamasaki2024}.

One major advantage of the Toric code is its formulation -- the syndrome extraction interrogates small clusters of qubits that are neighbours on some two-dimensional lattice, complying with the constraints of many physical implementations of quantum computers. Indeed, under these locality restrictions, the code is asymptotically optimal \cite{bravyi2010b}; it is impossible for a code that is local in two dimensions to exceed $kd^2\sim n$. Our search for well-performing codes must therefore violate the locality restriction. There are certainly experimental scenarios that can take advantage of this lifting of restrictions \cite{bluvstein2024}.

In studying $[[n,k,d]]$ codes without the locality restriction, we still choose to retain the property that the check operators should only act on a small number of qubits. We say the code is ``low density'', and hence the codes are known as low density parity check codes (LDPC). Classical versions of these codes are now extensively used in telecommunications. In the quantum regime, the best-known codes were, for a long time, the hypergraph product codes \cite{tillich2014}, limited to the performance of
\begin{equation}
kd^2\sim n^2.
\label{eq:hypergraph_limit}
\end{equation}
Clearly, an exciting improvement, but still a large step away from being good codes, with the limitation being achieving large enough $d$. Then, in about 2022, a flurry of results \cite{panteleev2022,panteleev2022a,evra2022} first showed that this performance could be exceeded, and ultimately resulted in good codes being produced. The tipping point was an initial example due to Hastings, Haah \& O'Donnell \cite{hastings2021} that showed \cref{eq:hypergraph_limit} could be violated, and its generalisation by Breuckmann \& Eberhardt \cite{breuckmann2021}, achieving the same
$kd^2\sim n^{2}$
but crucially realising it with $d\sim n^{3/5}$, where previous studies were limited to $d\sim\sqrt{n}$.


Breuckmann \& Eberhardt \cite{breuckmann2021} take the well-characterised LPS family of expander graphs \cite{lubotzky1988} and formulate a `balanced product' with a repetition code of appropriate size.
However, this product was specified in an intensely mathematical way which is intimidating to mere mortals, such as ourselves. In this paper, we express the constructions in the language of parity check matrices and provide insight into the key results, making the conversion from graph to code particularly straightforward, \cref{eq:balanced}. Not all proofs are complete -- we occasionally favour pedagogy, knowing that the rigorous results have appeared elsewhere \cite{breuckmann2021}.

\section{Quantum Error Correcting Codes}

In this section, we will briefly review some of the basics of error correction \cite{roffe2019}. It is assumed that the interested reader will already be familiar with this, and our primary focus is to establish notation.

In classical error correction, we can specify an error correcting code by its parity-check matrix $H\in\{0,1\}^{(n-k)\times n}$. A binary string $x\in\{0,1\}^n$ is a codeword if $Hx=0$, with the idea being that if any pair of codewords differ in at least $d$ places, evaluating non-trivial $Hy$ allows us to detect the presence of some errors and, under the assumption that fewer than $d/2$ errors have occurred, to return $y=e\oplus x$ to the original $x$ by determining $e=\underset{v:H(y\oplus v)=0}{\text{argmin}} |v|$. The code is said to be low-density if every row and every column of $H$ has a weight bounded by some constant, $s$.

In a quantum setting, we can use the same classical code to protect against bit-flip errors $X_i$. We start with a codeword as a quantum state, $\ket{x}$. It might acquire some bit-flip errors $X_e$ such that the state becomes $\ket{y}=\ket{e\oplus x}=X_e\ket{x}$. Here, we use the notation
$$
X_e=\bigotimes_{i=1}^nX^{e_i}.
$$
to express the idea that Pauli $X$ operations (bit-flips) have been applied to sites $i$ where $e_i=1$. For each parity-check of the code, $r\in H$, we associate a measurement
$$
Z_r=\bigotimes_{i=1}^nZ^{r_i}.
$$
This gives a measurement outcome $\bra{y}Z_r\ket{y}=(-1)^{r\cdot y}=(-1)^{r\cdot e}$, exactly the same as the result of computing the parity check in the classical case.

In a quantum code, we must protect both against $X$ errors and $Z$ errors. CSS codes \cite{calderbank1996,steane1996a}, to which we limit ourselves, separate the challenge of error detection, recognising that if one can independently detect and correct $X$ and $Z$ errors on a single qubit, then one can correct any single-qubit error. It divides that task into a parity-check for $X$ errors, made up of Pauli-$Z$ strings specified by a parity-check matrix $H_Z$ (as above), and a parity-check for $Z$ errors, which uses Pauli $X$-strings specified by $H_X$. Specifically, on $n$ physical qubits we might have $m_x$ linearly independent parity checks \footnote{Beware: in what follows, the rows are not linearly independent.} in $H_X\in\{0,1\}^{m_X\times n}$ such that each row $r\in H_X$ defines a single stabilizer $X_r$.
Similarly, for $H_Z\in\{0,1\}^{m_Z\times n}$, a row $r$ defines an operator $Z_r$. In order to be able to measure all of these values simultaneously (and not have the results of one change the system so that previous measurements are irrelevant), all the measurement operators must commute. All $X$-types commute with each other, but we require
$$
[X_r,Z_{r'}]=0\qquad\forall r\in H_X, r'\in H_Z.
$$
This is succinctly summarised by the property
\begin{equation}
H_X\cdot H_Z^T=0, \label{eq:anticommute}
\end{equation}
where all arithmetic is performed modulo 2. 
Since the operators all commute and square to $\identity$, they are stabilizers. Indeed, the codewords of a quantum error correcting code are those states $\ket{\psi}$ which are stabilized by the parity-checks:
$$
Z_r\ket{\psi}=\ket{\psi}\quad\forall r\in H_Z,\qquad X_r\ket{\psi}=\ket{\psi}\quad\forall r\in H_X.
$$
The dimension $k=n-m_A-m_B$ of the space spanned by these states determines the number of logical qubits. For each logical qubit $i$, we associate logical $X_{L,i}$ and logical $Z_{L,i}$ operators, themselves comprising Pauli $X$ and $Z$ strings respectively. They commute with all the stabilizers and generate the algebra of Pauli operators, 
$$
[X_{L,i},X_{L,j}]=[Z_{L,i},Z_{L,j}]=0
$$
and, especially,
$$
\{X_{L,i}, Z_{L,i}\}=0,\qquad [X_{L,i},Z_{L,j}]=0
$$
where the latter only holds for $i\neq j$.

Logical operators are not unique. We can `dress' them with stabilizers, so, for example, $X_{L,1}$ and $X_{L,1}X_r$ for $r\in H_X$ have the same effect, preserving the logical space and the anti-commutation relations. This gives us a lot of flexibility in their description, and makes the problem of determining the distance of a code (the smallest possible weight of a logical operator) highly non-trivial.

The key properties of a code are summarised by the $[[n,k,d]]$ notation, conveying that there are $n$ physical qubits, $k$ logical qubits, and the code has a distance $d$. We will use two variants of this notation. In the case where the distance is asymmetric, meaning that we can tolerate a different number of $X$ errors compared to $Z$ errors, we write $[[n,k,d_X,d_Z]]$. When asymptotics are the focus, we might denote $\order{[[n,n^{3/4},n^{1/2}]]}$, for example, to convey that when there are $n$ physical qubits, the number of logical qubits is $\order{n^{3/4}}$ and the distance is $\order{n^{1/2}}$. Throughout this paper, we have used a SAT solver \cite{kay2025} to find the code distance of any examples, such as those found in \cite{kay2025a}. The advantage of the SAT solver is that it does not have to exhaustively search every possible error string to find the shortest logical operator; it can skip parts of the search space and provides a checkable proof that there isn't a smaller logical operator.

\subsection{The LDPC Challenge}

In classical error correction, it's relatively easy to make a good low density parity check (LDPC) code \cite{gallager1963}. Pick any sufficiently large random binary matrix with appropriate row and column weights, and it is almost certain that the number of logical bits and distance against errors both scale linearly in the number of physical qubits! Why, then, is it so challenging to build quantum codes?

Imagine that we started with a good classical LDPC code's parity check matrix as the check matrix for $X$ errors ($H_Z$). By definition, \emph{all} low weight $X$-type operators yield a non-trivial syndrome. So there are no low-weight $X$-type operators ($H_X$) available for detecting the presence of $Z$ errors as they must fulfil \cref{eq:anticommute}. It requires some difficult balancing in order to be able to have both low weight $X$-type and $Z$-type stabilizers while leaving a large distance for a non-trivial number of logical qubits.
 One elegant design is the hypergraph product code \cite{tillich2014}
\begin{align*}
H_X&=\begin{bmatrix}
H_1\otimes \identity & \identity\otimes H_2
\end{bmatrix} \\
H_Z&=\begin{bmatrix}
\identity\otimes H_2^T & H_1^T\otimes\identity
\end{bmatrix}
\end{align*}
with which it is particularly straightforward to verify \cref{eq:anticommute}. We are free to pick $H_1$ and $H_2$, but the tensor product imposes a blow-up in the number of qubits that the distance will never be able to compensate for. Chain complexes have proved a particularly insightful route for yielding new structures of parity-check matrix that also obey \cref{eq:anticommute}, and we will make extensive use of two such structures in this review, while studiously avoiding further mention of chain complexes.

\section{Balanced Product}

The main construction of \cite{breuckmann2021} is the balanced product code. While very general, restrictions were made for the ensuing analysis. Specifically, in the product of two classical codes, one was taken to be a repetition code of a defined size. We make those assumptions from the beginning. A more general case is discussed in \cref{sec:missing_product}, which encompasses both this case and the hypergraph product as opposite extremes. Hence, we give ourselves the maximal opportunity to observe behaviour that diverges from the hypergraph product.

\begin{definition}\label{def:balanced_product}
Consider a binary matrix $I$ which has an order $l$ permutation symmetry, meaning there are permutation operations $R^TR=\identity, \ C^TC=\identity$ such that
$$
RI=IC^T,\qquad R^l=\identity,\qquad C^l=\identity,
$$
and all the orbits of $R,C$ are the same length, $l$.

\noindent The paritycheck matrices of the balanced product are:
\begin{equation}\label{eq:balanced}
H_X=\begin{bmatrix}
I^T & \identity+C
\end{bmatrix}, \qquad
H_Z=\begin{bmatrix}
\identity+R & I
\end{bmatrix}.
\end{equation}
\end{definition}
\cref{eq:balanced} clearly satisfies \cref{eq:anticommute}, and permits free choice of the binary matrix $I$ \footnote{Indeed, this would still be true if we replaced $\identity+C$ and $\identity+R$ with $f(C)$ and $f(R)$ respectively, where $f(x)$ is a low weight polynomial with binary coefficients.}. If $I$ is LDPC, then so is the new code. The lack of the tensor product substantially reduces the number of physical qubits involved. Note that the matrices $\identity+C$ and $\identity+R$, taken alone, would define order $l$ repetition codes. 

\begin{example}\label{ex:repetition_small}
Let $I$ be the incidence matrix of a 6-cycle,
\begin{center}
\begin{tabular}{cc}
\begin{tikzpicture}[baseline={([yshift=-axis_height]A6)}]
\foreach \x in {1,2,3,4,5,6} {
\node [circle,fill=black,minimum width=0.5cm,shift=({60*\x}:1.3),text=white] (A\x) at (0,0) {$\scriptstyle\x$};
}
\draw [thick] (A6) -- (A1)--(A2)--(A3)--(A4)--(A5)--(A6);
\end{tikzpicture}
&
$
,\ I=\begin{bmatrix} 1 & 1 & 0 & 0 & 0 & 0 \\ 0 & 1 & 1 & 0 & 0 & 0 \\ 0 & 0 & 1 & 1 & 0 & 0 \\ 0 & 0 & 0 & 1 & 1 & 0 \\ 0 & 0 & 0 & 0 & 1 & 1 \\ 1 & 0 & 0 & 0 & 0 & 1 \end{bmatrix}.
$
\end{tabular}
\end{center}

Let $P$ by the cyclic permutation of 6 vertices, so $I=\identity+P$. $I$ has many symmetries; we choose an order 3 symmetry $R=C^T=P^2$. The orbits in this case are $(1,3,5)$ and $(2,4,6)$. We proceed to construct the balanced product
\begin{align*}
H_X&=\begin{bmatrix} \identity+P^T & \identity+{P^T}^2 \end{bmatrix} \\
H_Z&=\begin{bmatrix} \identity+P^2 & \identity+P\end{bmatrix}.
\end{align*}
The explicit matrices can be found in \cite{kay2025a}. 
One can verify that $H_X\cdot H_Z^T=0$ (modulo 2), and that there are logical operators
\begin{align*}
X_{L,1}&=X_1X_3X_5,\\
Z_{L,1}&=Z_5Z_6Z_8Z_9,\\
X_{L,2}&=X_2X_4X_{11}X_{12},\\
Z_{L,2}&=Z_7Z_9Z_{11}.
\end{align*}

This CSS code is $[[12,2,3]]$. Its performance already exceeds the $[[14,2,3]]$ of two copies of the CSS-perfect Steane code.

\end{example}

\subsection{Repetition Code}
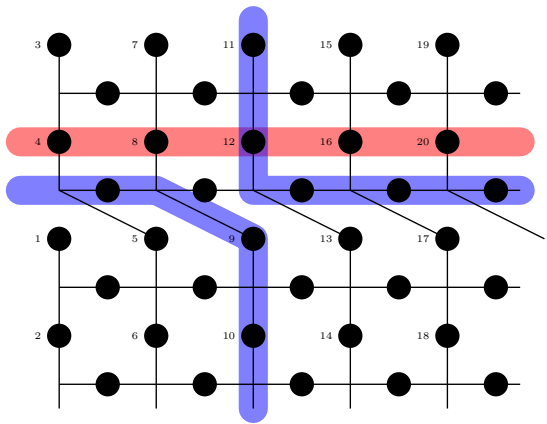
\begin{figure}
    \centering
    \begin{adjustbox}{max width=0.4\textwidth}
\begin{tikzpicture}
\foreach \x[evaluate={\xi=ifthenelse(\x==5,2,2*\x+2)}] in {1,...,5} {%
  \foreach \y[evaluate={\yi=ifthenelse(\y<3,ifthenelse(int(4*(\x) + \y)==22,20,int(Mod(4*(\x) + \y -2,20))),int(Mod(4*(\x-1) + \y -2,20)));}] in {1,...,4} {%
    \node [circle,fill=black, minimum width=0.5cm,label={left:{$\scriptstyle \yi$}}] (m-\x-\y) at (2*\x,-2*\y) {}; 
    \node [circle,fill=black, minimum width=0.5cm] at (2*\x+1,-2*\y-1) {}; 
    }
  \draw [thick,black] (2*\x,-2) -- (2*\x,-5) -- (2*\x+2,-6) (\xi,-6) -- (\xi,-9.5);
    }
\foreach \y in {1,...,4} {%
  \draw [thick,black] (2,-2*\y-1) -- (11.5,-2*\y-1);
}
\begin{scope}[on background layer]
\draw[line width=0.6cm,red,opacity=0.5,line cap=round] (1.2,-4) -- (11.5,-4);
\draw[line width=0.6cm,blue,opacity=0.5,line cap=round, line join=round] (6,-1.5) -- (6,-5) -- (11.5,-5) (1.2,-5) -- (4,-5) -- (6,-6) -- (6,-9.5);
\end{scope}
\end{tikzpicture}
\end{adjustbox}
\caption{Depiction of code based on length 20 repetition code, using length $l=5$ cycle for symmetry. Qubits in left-hand block are numbered consecutively. $X$-type stabilizers act on the 4 qubits that neighbour a vertex. $Z$-type stabilizers act on faces. Boundary conditions are periodic. A pair of anti-commuting logical operators are shown.}\label{fig:toric_like}
\end{figure}
In \cref{ex:repetition_small}, we examined a simple case where $I$ is a repetition code of size 6. In general, consider a repetition code of size $lm$ and take an additional symmetry of orbit length $l$, so there are $m$ orbits. If we lay the qubits out on two copies of an $l\times m$ grid, then the code looks almost like the Toric code -- each stabilizer acts on 4 qubits. The only difference is the behaviour of the boundary conditions across one boundary; the balanced product has introduced a skew (see \cref{fig:toric_like}), and it is this skew that has the potential to increase the code distance.

If we wanted to construct logical operators, our first attempt would be to write down
\begin{align*}
X_{L,1}&=\prod_{i=0}^{l-1}X_{im+1},& \qquad& X_{L,2}=\prod_{i=1}^{lm}X_{lm+i}, \\
Z_{L,1}&=\prod_{i=1}^{lm}Z_i,&\qquad& Z_{L,2}=\prod_{i=0}^{l-1}Z_{2lm-im+1}.
\end{align*}
Each logical qubit is defined with one logical Pauli supported on every site of either the left- or right-hand block. The other just loops around one orbit. Provided $l$ is odd, the two anti-commute, so let's assume $l$ is odd, $m$ is even. The single-orbit terms are just the same as the horizontal (say) logical operators of the Toric code, and the weight $l$ is their corresponding distance; one cannot make them shorter by dressing with stabilizers. The whole-block operators are the equivalent of the vertical operators, except that the skew means that the ends don't match after a single orbit. However, in this case, these operators can be dressed by stabilizers to decrease their weight. For example, with $Z_{L,1}$, we can take the stabilizer that acts on $Z_1Z_{m+1}Z_{lm+1}Z_{lm+2}$. Dressing with this preserves the weight. However, if we also dress with $Z_2Z_{m+2}Z_{lm+2}Z_{lm+3}$, the net weight drops by 2. We can perform a similar trick for each orbit, creating a run of terms covering all but one element of the orbit (since $l$ is odd) on the left-hand block, and replacing them with just $Z_{lm+1}Z_{lm+m}$ on the right-hand block. We thus arrive at a dressed logical operator with a weight $m+l-1$, as depicted in \cref{fig:toric_like}. We claim this is the minimum weight of the logical operator. While the code is strictly still $[[lm,2,l]]$, if we choose a subsystem code in which we just use one of the logical qubits, we can instead treat it as an $[[lm,1,l+m-1,l]]$ code, which therefore has a potential advantage due to its asymmetric distance. Note that the code defined in this way is still local, and therefore still constrained by the same bound \cite{bravyi2010b}.

Of course, while the repetition code yields the optimal distance for the storage of a single bit, the magic of coding since Shannon's Noisy Coding Theorem is that error correcting codes don't have to sacrifice much distance to gain a lot more storage capacity. We should therefore expect the possibility to vastly outperform the Toric code through judicious choice of $I$.

\subsection{Expanders}

Key to ensuring a large code distance in the balanced product is the imposition that $I$ should be \emph{expanding}:
\begin{definition}
A matrix $I\in\{0,1\}^{m\times n}$ is said to be $(\alpha,\beta)$ expanding if:
\begin{itemize}
\item $\forall x\in\{0,1\}^n:|x|\leq \alpha n,\ |Ix|\geq\beta|x|$
\item $\forall y\in\{0,1\}^m:|y|\leq \alpha m,\ |I^Ty|\geq\beta|y|$.
\end{itemize}
\end{definition}
\noindent For example, if $\beta>1$ then for all sufficiently short strings $x$, $Ix$ has greater weight than $x$; it has been expanded. The incidence matrix or biadjacency matrix of an expander graph is an expanding matrix \cite{panteleev2022a}.

Indeed, we can now state the main theorem.

\begin{theorem}\label{thm:main}
Let $I\in\{0,1\}^{m\times n}$ be an $(\alpha,\beta)$-expanding matrix which obeys the properties of \cref{def:balanced_product}, has row and column weights of at most $s$, and
$$
\max(|\text{ker}(I)|,|\text{ker}(I^T)|)=k_0
$$ 
The balanced product code of \cref{def:balanced_product} has properties
$$
[[n+m,\frac{k_0}{l},d_X,d_Z]].
$$
The code is LDPC. There exist positive constants $\gamma_X$ and $\gamma_Z$ such that $d_X=\gamma_Xl$ \footnote{assuming a large enough code $2\beta\gamma_Xl<\alpha n$ that a token case can be neglected.} and $d_Z=\gamma_Zm$.
\end{theorem}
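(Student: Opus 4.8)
The plan is to analyse the four quantities $k$, $d_X$, $d_Z$, and the LDPC property separately, leaning on the specific block structure of \cref{eq:balanced}. The LDPC claim is immediate: each row of $H_X$ is a row of $I^T$ (weight $\le s$) concatenated with a row of $\identity+C$ (weight $2$), so row weights are $\le s+2$; column weights are handled the same way, and symmetrically for $H_Z$. So the real content is the three parameters.

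For the dimension $k$, I would count $k = (n+m) - \operatorname{rank}_{\mathbb F_2} H_X - \operatorname{rank}_{\mathbb F_2} H_Z$. The key observation is that the permutation symmetry $RI = IC^T$ lets one simultaneously block-diagonalise $R$, $C$ and $I$ over the orbit structure: since $R^l=C^l=\identity$ with all orbits of length $l$, one can change basis so that $R$ and $C$ become (over a suitable ring, e.g. $\mathbb F_2[x]/(x^l-1)$) multiplication by $x$, and $I$ becomes a matrix with entries in that ring. In this picture $H_X = [\,I^{T} \mid 1+x\,]$ and $H_Z = [\,1+x \mid I\,]$ over $\mathbb F_2[x]/(x^l-1)$. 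The factor $1/l$ in $k_0/l$ is exactly the statement that the "interesting" kernel vectors of $I$ and $I^T$ organise into orbits of size $l$ under the symmetry, and that the $(1+x)$ block kills precisely the symmetric (orbit-averaged) part, leaving $k_0/l$ net logical qubits. I would make this precise by showing that $\ker H_Z$ modulo $\operatorname{im} H_X^T$ is isomorphic to the quotient of $\ker(I^T)$ (or $\ker I$, whichever gives $k_0$) by the action that identifies a vector with its $R$- (resp. $C$-)shift — so the symmetry literally divides the logical count by $l$.

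The distances are where the expansion hypothesis does its work, and the $d_Z=\gamma_Z m$ bound is the main obstacle. A logical $Z$-operator is a pair $(a,b)\in\{0,1\}^n\times\{0,1\}^m$ with $(\identity+R)a + I b \equiv 0$ (it lies in $\ker H_Z$) but not in $\operatorname{im} H_X^T = \{((\identity+C^T)v,\ \text{something})\}$ — I would write out the exact non-triviality condition. The lower bound $|a|+|b|\ge \gamma_Z m$ should follow by a standard expander-code argument: if the total weight were too small, then $|b|\le \alpha m$, so $|I b|\ge \beta|b|$, and the constraint forces $|I b| = |(\identity+R)a| \le 2|a|$, giving $|a|\ge \tfrac{\beta}{2}|b|$; combined with the fact that a genuinely nontrivial logical operator cannot have $b=0$ (else $a$ alone is forced into a degenerate, correctable configuration by the repetition-code block), one bootstraps a linear-in-$m$ lower bound, while an explicit representative (a block-supported operator as in the repetition-code discussion) gives the matching upper bound $O(m)$. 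The $d_X = \gamma_X l$ analysis is dual but slightly different in character: here the "short direction" is the length-$l$ repetition block, and the footnote's caveat $2\beta\gamma_X l < \alpha n$ is exactly the condition needed so that the expansion inequality $|Ix|\ge\beta|x|$ can be applied to the relevant string without it being too long — one shows any nontrivial logical $X$ either loops the $l$-cycle (giving weight $\ge \gamma_X l$ directly) or forces an $I$-image that expansion rules out below that length. I expect the delicate points to be (i) pinning down the precise non-triviality conditions modulo the stabiliser group in the ring picture, and (ii) handling the "token case" the footnote alludes to, where a would-be short logical operator is long enough to escape the expansion regime and must be excluded by a separate, direct argument.
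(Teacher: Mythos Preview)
Your LDPC paragraph is fine, and your ring picture for $k$ is a reasonable alternative to the paper's direct counting (the paper just bounds $|\ker(H_X^T)| = |\ker(\identity+R^T)\cap\ker(I^T)| \ge k_0/l$ by orbit-averaging null vectors of $I^T$). But your distance argument has a genuine gap.

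First, a bookkeeping slip: a logical $Z$ operator lies in $\ker H_X$, not $\ker H_Z$; with $l_Z=(u,v)$, $u\in\{0,1\}^m$, $v\in\{0,1\}^n$, the constraint is $I^T u + (\identity+C)v = 0$. More importantly, your one-shot expansion step only yields a \emph{ratio}: from $|u|\le\alpha m$ you get $|I^T u|\ge\beta|u|$ and $|I^T u|=|(\identity+C)v|\le 2|v|$, hence $|v|\ge\tfrac{\beta}{2}|u|$. That does not bound $|u|+|v|$ from below by anything linear in $m$ unless you already know $|u|$ is large --- and you don't. Your escape hatch, that ``a genuinely nontrivial logical operator cannot have $b=0$'', is false here: the canonical undressed logical $Z$ is exactly $(u',0)$ with $I^T u'=0$, supported entirely on the left block. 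So the bootstrapping you allude to has no mechanism.

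The idea you are missing is the orbit-summation trick that links the dressed operator back to the undressed one. Define $u^{(t)}=\bigl(\sum_{i=0}^{t-1}{R^T}^i\bigr)u$. Because the dressing is $u=u'+(\identity+R^T)h$, the full orbit sum telescopes: $u^{(l)}={u'}^{(l)}$, which has weight $\ge\alpha m$ by expansion applied to the undressed representative. Since $|u^{(1)}|=|u|$ is assumed small, there is a first $t_0$ where $|u^{(t_0+1)}|$ crosses $\alpha m$; at that step $|u^{(t_0)}|\ge(\alpha-\gamma_Z)m$ is still in the expansion regime, and now expansion on $I^T u^{(t_0)}=(\identity+C^{t_0})v$ forces $2|v|\ge\beta(\alpha-\gamma_Z)m$. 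That transition-point argument is the engine; without it the single inequality you wrote does not close. The $d_X$ side needs a dual version (with a separate averaging lemma to guarantee some $|v^{(t)}|$ is large enough), and is where the footnote's side condition enters; your sketch (``either loops the $l$-cycle or expansion rules it out'') does not yet contain this.
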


Note that we will typically be working in a situation where $m>n$, which causes us to focus on the left-hand block (since $I^T$ has more columns than rows, and therefore has a null space that we can make use of). If it were the case that $m<n$ (as in section \ref{sec: Bringing it Together}), we would instead focus on logical qubits localised on the right-hand block. The analysis would be essentially unchanged.

\begin{proof}
The number of physical qubits follows a straightforward examination of the matrix sizes, $R\in\{0,1\}^{m\times m}$, $C\in\{0,1\}^{n\times n}$ and $I\in\{0,1\}^{m\times n}$.

The total number of rows of $H_X$ and $H_Z$ is $n+m$, the same as the number of columns. Hence, the logical qubits derive from the linear dependence of the rows.
$$
k=|\text{ker}(H_X^T)|+|\text{ker}(H_Z^T)|
$$
To find a null vector of $H_Z^T$, we seek a $v$ such that $vH_Z=0$, requiring both $(\identity+C^T)v=0$ and $Iv=0$. Thus,
$$
|\text{ker}(H_Z^T)|=|\text{ker}(\identity+C^T)\cap\text{ker}(I)|.
$$
However, we know essentially nothing about $|\text{ker}(I)|$ \footnote{Commonly, $I$ has a fixed row weight, so it has a null vector comprising the all-ones. This is often the only one.}, so we'll simply take the bound $|\text{ker}(H_Z^T)|\geq 0$. For $H_X$, we are given that $|\text{ker}(I^T)|=k_0$. However, we need
$$
|\text{ker}(H_X^T)|=|\text{ker}(\identity+R^T)\cap\text{ker}(I^T)|.
$$
The null vectors $v$ of $I^T$ ($vI=0$) might not have the symmetry of $R$. If not, however, then $vR$ must also be a null vector, and $vR^2$, etc. At worst, it would take $v\sum_{i=0}^{l-1}R^i$ to create a null vector of $R$ as well. Hence,
$$
k\geq|\text{ker}(H_X^T)|\geq \frac{k_0}{l}.
$$
This is our lower bound on the number of logical qubits.

We defer a full distance proof until \cref{sec:gory}, as this is the key technical challenge. Instead, we will motivate the result. Imagine that the lowest weight representation of the operators coincided with their being entirely localised either on the left-hand or right-hand block only. If so, then the left-hand logical qubits all have logical $Z$ operations of the form
$$
l_Z=\begin{bmatrix} u \\ 0 \end{bmatrix}.
$$
They must satisfy $H_Xl_Z=0$, and thus $I^Tu=0$. However, recall that $I$ is expanding -- if $|u|\leq\alpha m$ then $|I^Tu|\geq\beta |u|>0$. In other words it must be that $|u|>\alpha m$. Its left-block $Z$ distance scales linearly in the system size. It is quite obvious how to formulate the matching logical $X$ -- simply find any unit vector $e$ and create an all-ones vector over an orbit of $R$: $\sum_{i=0}^{l-1}R^ie$. These have weight $l$.

We can give a similar argument for the right-hand logical operators, although the asymmetry in the knowledge of null vectors of $I$ and $I^T$ is substantial enough that, in fact, we will focus entirely on the left-hand logical qubits, fixing all the right-hand qubits to be in their $\ket{0}$ state; we will be using a subsystem code.

\end{proof}

The details presented so far are actually sufficient to start explicitly constructing logical operators. If $v$ is a null vector of $I^T$, $I^Tv=0$ then we can use
$$
l_Z=\begin{bmatrix}
\sum_{i=0}^{l-1}R^iv \\ 0
\end{bmatrix}.
$$
We have already argued that there are at least $k_0/l$ of these. As ever, we must worry whether these are truly logical operators, or just a product of stabilizers. We prove that they are distinct logical operators (such that the counting covers all $k_0/l$ logical operators) by finding unique anti-commutation relations. To that end, think of the matrix of logical $Z$ operators $L_Z$. There must be individual columns where, for each logical operator, it is the only one that acts on that site \footnote{From a programming perspective, apply row reduction modulo 2. Those vertices are identified by the $\identity$ block.}. Use these as the $e$ for the corresponding $l_X$:
$$
l_X=\begin{bmatrix}
\sum_{i=0}^{l-1}R^ie \\ 0
\end{bmatrix}.
$$
We are guaranteed that $l_X\cdot l_Z=l$, so assuming odd $l$, these are definitely logical operators yielding a qubit. Moreover, by design,  for any $l_X' \neq l_X$, we have $l_X'\cdot l_Z=0$; they must define different logical qubits.


\subsection{Distance Proof -- The Full and Gory Details}\label{sec:gory}

To move beyond the motivating arguments of the previous section that attempt to justify the code distance, we will present the full calculation, following \cite{panteleev2022a}. For the left-hand $Z$ operators we have that when they are written in the form
$$
\begin{bmatrix} u' \\ 0 \end{bmatrix},
$$
they have weight $>\alpha m$. But there are many other ways in which they could be written, using a dressed version of the logical operators,
$$
\begin{bmatrix} u \\ v \end{bmatrix}.
$$
We need to prove a lower bound on $|u|+|v|$ for all possible ways of writing the same logical operator.
As stabilizers, $$I^Tu'=0=I^Tu+(\identity+C)v,$$ where there exists an $h$ such that
$$
\begin{bmatrix} u & v\end{bmatrix}=\begin{bmatrix} u' & 0 \end{bmatrix} + h\cdot H_Z
$$
It must be that $v=I^Th$ and $u=u'+(\identity+R^T)h$. We can verify
\begin{align*}
I^Tu+(\identity+C)v&=I^T(u'+(\identity+R^T)h)+(\identity+C)I^Th \\
&=I^Tu'+(\identity+C)I^Th+(\identity+C)I^Th \\
&=0,
\end{align*}
making use of the symmetry of $I$.

Why is this important? Our aim is to prove that either $|u|$ or $|v|>\gamma_Z m$. Let's work in the regime $|u|<\gamma_Z m$, otherwise we're trivially done. We will define
$$
u^{(t)}=\left(\sum_{i=0}^{t-1}{R^T}^i\right)u.
$$
This means that $(\identity+R^T)u^{(t)}=(\identity+{R^T}^t)u$. Hence,
\begin{align*}
u^{(l)}&=\left(\sum_{i=0}^{l-1}{R^T}^i\right)u \\
&={u'}^{(l)}+(\identity+{R^T}^l)h \\
&={u'}^{(l)},
\end{align*}
which must be a null vector on just the left-hand side with weight at least $\alpha m$. Meanwhile, $u^{(1)}=u$, such that $|u^{(1)}|<\gamma_Z m$. Using the freedom to pick $\gamma_Z$, if we select $\gamma_Z<\alpha$, there must exist a $t_0$ which is the smallest $t$ such that $|u^{(t_0+1)}|>\alpha m$. In this case, $u^{(t_0+1)}=u^{(t_0)}+{R^T}^{t_0}u$, so we can give a weight bound of
$$
|u^{(t_0)}|=|u^{(t_0+1)}+{R^T}^{t_0}u|\geq(\alpha-\gamma_Z)m.
$$
Now,
\begin{align*}
(\identity+C^{t_0})v&=\left(\sum_{i=0}^{t_0-1}C^i\right)(\identity+C)v \\
&=\left(\sum_{i=0}^{t_0-1}C^i\right)I^Tu \\
&=I^Tu^{(t_0)} \\
2|v|&\geq \beta(\alpha-\gamma_Z)m,
\end{align*}
where we have again invoked expansion because $|u^{(t_0)}|<\alpha m$. Provided we select $\gamma_Z$ satisfying the already specified $\gamma_Z<\alpha$ and the new $\gamma_Z<\beta(\alpha-\gamma_Z)/2$, this works. Thus, we can select anything smaller than
$$
\gamma_Z=\alpha\min\left(1,\frac{\beta}{2+\beta}\right),
$$
thus bounding the $Z$ distance by $\gamma_Z m$.

The $X$ distance can follow in a similar manner. Let us now assume that both $|u|,|v|<\gamma_X l$. We will prove a contradiction with the expansion properties of $I$. This time, we have a left-localised vector
$$
\begin{bmatrix} u' \\ 0 \end{bmatrix}
$$
which must satisfy $u'=Ru'$, and a dressed one
$$
\begin{bmatrix} u \\ v \end{bmatrix}=\begin{bmatrix} u'+Ih \\ (\identity+C^T)h \end{bmatrix}.
$$
We shall also define $v^{(t)}=(\identity+{C^T}^{t})h$. In particular, note that $v^{(1)}=v$. Now,
\begin{align*}
|Iv^{(t)}|&=|(\identity+R^t)Ih| \\
&=|(\identity+R^t)(u+u')| \\
&=|(\identity+R^t)u| \\
&\leq 2|u| \\
&\leq 2\gamma_X l.
\end{align*}
By way of contrast, let $t_0$ be the smallest integer such that $|v^{(t_0)}|>\frac{1-\alpha}{s} l$ (we will prove that such a number exists in \cref{lem:fiddly}). Then we have
\begin{align*}
|v^{(t_0)}|&=|(\identity+C^T)h+C^T(\identity+{C^T}^{t_0-1})h| \\
&\leq |v| + |v^{(t_0-1)}| \\
&\leq \alpha l+\frac{1-\alpha}{s} l \\
&=\frac{(s-1)\alpha+1}{s}l.
\end{align*}
So, if $\frac{(s-1)\alpha+1}{s}l<\alpha n$, $v^{(t_0)}$ should obey expansion properties, i.e.\ $|Iv^{(t_0)}|>\beta|v^{(t_0)}|$.
However, if we choose $\gamma_X<\frac{\beta (s-1)\alpha+1}{2 s}$ then we see that $|Iv^{(t_0)}| \leq \beta \frac{(s-1)\alpha + 1}{s}l$, which contradicts the expansion properties.

\begin{lemma}\label{lem:fiddly}
For any $h$, there exists a choice of $t$ such that
$$
|v^{(t)}|>\frac{1-\alpha}{s}l
$$
given that $|u|<\alpha l$.
\end{lemma}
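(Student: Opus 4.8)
The plan is to obtain the stronger statement that $\max_{1\le t\le l-1}|v^{(t)}|>\frac{1-\alpha}{s}l$, by pairing an averaging identity over $t$ with the hypothesis $|u|<\alpha l$. First I would unpack $v^{(t)}$ coordinate‑wise. Write $w_O$ for the number of $1$'s of $h$ inside a single orbit $O$ of $C$ (equivalently of $C^T$), and use that $C^T$ acts as an $l$-cycle on $O$. A direct count — coordinate $j\in O$ contributes $h_j+({C^T}^t h)_j$ to $v^{(t)}$, and as $t$ ranges over $\{1,\dots,l-1\}$ the index ${C^T}^t(j)$ ranges over $O\setminus\{j\}$ exactly once — gives
$$
\sum_{t=1}^{l-1}|v^{(t)}|=\sum_O 2\,w_O(l-w_O).
$$
Since $2w(l-w)\ge l\min(w,l-w)$ for every $0\le w\le l$, and the maximum of a finite list is at least its mean,
$$
\max_{1\le t\le l-1}|v^{(t)}|\ \ge\ \frac{1}{l-1}\sum_O 2w_O(l-w_O)\ \ge\ \frac{l}{l-1}\sum_O \min(w_O,l-w_O)\ \ge\ \sum_O \min(w_O,l-w_O).
$$
So it would suffice to prove $\sum_O\min(w_O,l-w_O)>\frac{1-\alpha}{s}l$.

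This is where $|u|<\alpha l$ is used. Let $\hat h$ be the $C$-invariant ``rounding'' of $h$: on each orbit $O$ set $\hat h|_O$ to the all‑ones vector when $w_O>l/2$ and to zero otherwise (odd $l$ rules out ties). Put $\delta:=h+\hat h$, so that $|\delta|=\sum_O\min(w_O,l-w_O)$ and, because $(\identity+C^T)\hat h=0$, also $(\identity+C^T)\delta=(\identity+C^T)h=v$. Hence adding the stabiliser $\begin{bmatrix} I\delta\\ (\identity+C^T)\delta\end{bmatrix}$ to $\begin{bmatrix} u\\ v\end{bmatrix}$ produces the left‑localised representative $\begin{bmatrix}\tilde u\\ 0\end{bmatrix}$ with $\tilde u:=u+I\delta=u'+I\hat h$. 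Because $\hat h$ is constant on each orbit, $RI\hat h=IC^T\hat h=I\hat h$, so $I\hat h$ is $R$-invariant, and since $u'$ satisfies $u'=Ru'$ so is $\tilde u$; therefore the support of $\tilde u$ is a union of $R$-orbits, each of length $l$, and $|\tilde u|$ is a multiple of $l$. As $\begin{bmatrix} u\\ v\end{bmatrix}$ represents a genuine (non‑trivial) logical operator, $\begin{bmatrix}\tilde u\\ 0\end{bmatrix}$ is not a stabiliser, so $\tilde u\ne 0$ and $|\tilde u|\ge l$. Using that every column of $I$ has weight at most $s$,
$$
l\ \le\ |\tilde u|\ \le\ |u|+|I\delta|\ \le\ |u|+s|\delta|\ <\ \alpha l+s|\delta|,
$$
so $|\delta|>\frac{1-\alpha}{s}l$, exactly the bound the averaging step needed. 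Chaining the two displays, some $t\in\{1,\dots,l-1\}$ has $|v^{(t)}|>\frac{1-\alpha}{s}l$.

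The only genuinely non‑routine ingredient — and hence the anticipated obstacle — is spotting the rounding $\hat h$ and the fact that $u+I\delta$ is $R$-invariant: this is the bridge converting ``$|u|$ is small'' into ``$h$ carries a lot of weight on orbits where it is non‑constant'', after which the averaging identity is a mechanical count. Two degenerate points must be handled along the way: the endpoints $t=0$ and $t=l$ give $v^{(t)}=0$ and must be excluded from the mean (hence the range $\{1,\dots,l-1\}$), and the standing assumption that $\begin{bmatrix} u\\ v\end{bmatrix}$ is a non‑trivial logical operator is precisely what excludes the vacuous case that $h$ is itself $C$-invariant — for then every $v^{(t)}=0$ and the argument would instead be forcing $\begin{bmatrix}\tilde u\\ 0\end{bmatrix}=\begin{bmatrix}u'\\0\end{bmatrix}$ to be a stabiliser.
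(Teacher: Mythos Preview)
Your proof is correct and follows essentially the same route as the paper: an averaging identity over $t$ shows $\max_t|v^{(t)}|$ is at least the orbit-reduced weight of $h$, and that weight is then bounded below via the column-sparsity of $I$ together with the fact that the corresponding left-localised representative is $R$-invariant and nontrivial. Your explicit rounding $\hat h$, $\delta=h+\hat h$, and $\tilde u=u'+I\hat h$ make rigorous what the paper handles informally by declaring ``we can always choose $|h_i|\le l/2$'' (a substitution which silently changes $u$ as well as $h$), but the substance of the two arguments is identical.
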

\begin{proof}
Recall that $h$ is selecting a linear combination of rows of $H_X$. However, we built certain symmetry properties into $H_X$. We thus consider decomposing $h=\sum_ih_i$ where each $h_i$ only has support on a single orbit of $C^T$. The intersections of different $h_i$ are trivial: $h_i\cdot h_j=0$ of $i\neq j$. Furthermore, it is always sufficient to select $|h_i|\leq l/2$ because, on the right-hand block, $h_i$ being all-ones yields an all-zeros result, i.e.\ we can arbitrarily switch between $h_i$ and $\bar h_i$. We just happen to always choose the one of lower weight.

We will now show that the average value of $|v^{(t)}|$ is at least $|h|$, meaning that at least one value of $t$ exists such that $|v^{(t)}|\geq |h|$. We have
\begin{align*}
|v^{(t)}|=|h+{C^T}^th|
=2|h|-2h\cdot({C^T}^th)
\end{align*}
The average is then given by
\begin{align*}
\sum_t|v^{(t)}|&=2l|h|-2\sum_th\cdot({C^T}^th) \\
&=2l|h|-2\sum_t\sum_ih_i\cdot({C^T}^th_i).
\end{align*}
For every pair of bits $j,k$, there is a choice of $t=k-j$ that causes them to overlap. There are $|h_i|^2$ such pairs, so the average is
\begin{align*}
\sum_t|v^{(t)}|&=2l|h|-2\sum_i|h_i|^2 \\
&\geq 2l|h|-2\frac{l}{2}\sum_i|h_i| \\
&=l|h|.
\end{align*}
Finally, we need a bound on $|h|$. Since the columns of $I$ contain no more than $s$ non-zero entries,
\begin{align*}
|h|&\geq\frac{|Ih|}{s} \\
&=\frac{1}{s}|u+u'| \\
&\geq\frac{1}{s}(|u'|-|u|).
\end{align*}
When $|u|<\alpha l$ and $|u'|\geq l$, this gives that
$$
|h|\geq\frac{1-\alpha}{s}l.
$$
\end{proof}

This completes the proof of the distance properties of the balanced product code claimed in \cref{thm:main}. In particular, for the left-hand block the $Z$-distance is linear in $m$, while the $X$ distance is linear in the length of the orbit of the symmetries, $l$. Similarly, for the right-hand block, the $X$ distance in linear in $n$ and the $Z$ distance in linear in $l$. We always focus on just one of these halves by using a subsystem code, concentrating on the one with the largest null space for $I$ or $I^T$. 

\section{Distance Balancing}

The limitation of the balanced product construction is that the distances against $X$ and $Z$ errors may be different. We aim to redress that balance, based on \cite{evra2022}; a process known as distance balancing.

\begin{definition}\label{def:distance_balance}
Given a quantum code comprising parity check matrices $H_X$ and $H_Z$ which yield distances $d_Z$ and $d_X$ respectively, these distances can be balanced using the parity-check matrix $H_C$ of a good classical LDPC code.
\begin{equation}
\begin{aligned}
\tilde H_Z&=\begin{bmatrix}
H_Z\otimes\identity & 0 \\
\identity\otimes H_C & H_X^T\otimes\identity
\end{bmatrix}\\
\tilde H_X&=\begin{bmatrix}
H_X\otimes\identity & \identity\otimes H_C^T
\end{bmatrix}
\end{aligned}
\end{equation}
\end{definition}

We have set this up assuming $d_X<d_Z$, but this can be switched around if the imbalance in the distance is in the opposite direction.

The number of physical qubits involved in the new code is immediate: $N=nn_C+m_X m_C$. If the quantum code encoded $k$ logical qubits, and the classical code encodes $k_C$ logical bits, we claim that the distance balanced code yields
$
K=kk_C
$
logical qubits.

To see this, let's assume that we have removed all linearly dependent rows from $H_X$, $H_Z$, and all the linearly dependent columns from $H_C$. The remaining sizes are $H_X\in\{0,1\}^{m_X\times n}$, $H_Z\in\{0,1\}^{m_Z\times n}$ and $H_C \in\{0,1\}^{m_C\times n_C}$. As a result, we have $k=n-m_X-m_Z$ and $k_C=n_C-m_C$. While all the rows of $\tilde H_X$ are linearly independent, the rows of $\tilde H_Z$ may not be. Let's denote by $r$ the dimension of the linear dependence. Then we have
\begin{align*}
K&=(nn_C+m_Xm_C)-m_X n_C-(m_Zn_C+nm_C)+r \\
&=kk_C-m_Zm_C+r.
\end{align*}
We just have to prove that $r=m_Zm_C$. We are searching for bit strings of the form
\begin{align*}
0&=\begin{bmatrix}
a\otimes b & c\otimes d
\end{bmatrix}\tilde H_Z \\
& = \begin{bmatrix} aH_Z\otimes b+c\otimes d H_C  & cH_X^T \otimes d \end{bmatrix}.
\end{align*}
If it were the case that $d=0$, then we'd need $aH_Z=0$, which has no solutions by construction. Thus, the only way we can solve this is with $H_Xc=0$, $c=H_Z^Ta$ and $b=H_C^Td$. (For any $a$, $H_Xc=H_XH_Z^Ta=0$ by \cref{eq:anticommute}). Hence, this is a solution for all $a\in\{0,1\}^{m_Z}$ and $d\in\{0,1\}^{m_C}$. Thus, as required, $r=m_Zm_C$.

The next step, of course, is to argue the distances. We want to see that these are given by
\begin{align*}
\tilde d_X&=d_Xd_C \\
\tilde d_Z&=d_Z
\end{align*}
where the matrices $H_X,H_Z,H_C$ have distances $d_Z,d_X,d_C$ respectively. The plan would then be to select $d_C=d_X/d_Z$, such that the distances against the two types of error will be the same. Notice that we can construct logical operators of the form
$$
l_X=\begin{bmatrix}L_X\otimes L \\0\end{bmatrix}, \qquad l_Z=\begin{bmatrix} L_Z\otimes e \\ 0 \end{bmatrix},
$$
where $L_X$ represents a logical $X$ of the original code, $L$ is a codeword of the classical code, and $e$ is a unit vector such that $e\cdot L=1$. This clearly gives logical operators of weight $d_Xd_C$ and $d_Z$ respectively. But could there be anything shorter? We will not give the full details here. Much as before, we have to ensure that there are no shorter dressed operators, or at least that dressing the logicals does not fundamentally change the scaling.

In general, this process allows us to transform an $[[n,k,d_X,d_Z]]$ code into an $\order{[[n\frac{\max(d_X,d_Z)}{\min(d_X,d_Z)},k\frac{\max(d_X,d_Z)}{\min(d_X,d_Z)},\max(d_X,d_Z)]]}$ code.

\section{Bringing it Together}\label{sec: Bringing it Together}

The procedure is now clear: find a classical expander $I$ with an appropriate symmetry, build the distance asymmetric code specified in \cref{def:balanced_product}, and then rebalance the distance by building it into \cref{def:distance_balance}.

The challenge, then, is to find a family of expanders $I$ with the right symmetry properties so that we can understand the behaviour of the codes as we change the number of physical qubits involved. Few families of classical LDPC codes are known.

One particularly simple construction that replicates the performance of the hypergraph product codes is to pick a random expander graph with incidence matrix $I_0\in\{0,1\}^{m\times n}$. Then set $I=I_0\otimes\identity_l$. If $M$ is a cyclic permutation over $l$ vertices, then we can define
$$
R=\identity_m\otimes M,\qquad C=\identity_n\otimes M^T.
$$
With high probability, we get a code that is $\order{[[nl,nl,l,n]]}$ before distance balancing. In fact, distance balancing is not even necessary here simply by selecting $l\sim n$. Then the code is $\order{[[N,N,\sqrt{N}]]}$. Instead, with $I_0=\identity+M$ (not an expander), we recover the familiar Toric code. Of course, the tensor product structure misses out on what could be much denser packings of information in the LDPC codes. 

\subsection{LPS Expander Graphs}

The family of expander graphs introduced by Lubotzky, Phillips and Sarnak (LPS) \cite{lubotzky1988} was the one chosen in \cite{breuckmann2021} to demonstrate the distance properties.

The LPS expander graphs are parametrised by two primes $p,q$, which specify that the graph contains $q(q^2-1)$ vertices, each of which has degree $s=p+1$. Hence, the number of edges is $q(q^2-1)(p+1)/2$. Being regular graphs, we can take their incidence matrix $I_0$. The graphs are expanding, and have an appropriate symmetry of order $l=q$. The null space has dimension $q(q^2-1)(p-1)/2$. We hold $p$ fixed in order to ensure the low density property, and scale $q$. When we apply the balanced product construction, prior to distance balancing, we find that there are $q(q^2-1)(p+3)/2$ physical qubits encoding at least $(q^2-1)(p-1)/2$ logical qubits. The $X$ and $Z$ distances, respectively, are $\order{q}$ and $\order{q^3}$. Distance balancing yields a code $\order{[[q^5,q^4,q^3]]}$, i.e.\ $\order{[[N,N^{4/5},N^{3/5}]]}$. This achieves $kd^2\sim N^2$. Crucially, the distance scales better than $\sqrt{N}$.

In terms of actually constructing the graph, the whole procedure is well explained in \cite{breuckmann2021}, and what we can add best is an example.
\begin{example}\label{ex:LPS}
Let $p=3$ and $q=5$.
We specify the vertices of a graph by correspondence to the $2\times 2$ invertible matrices where the elements are chosen mod $q$. This is the general linear group. However, we need to restrict to the projective general linear group, which means that all the matrices $A,2A,3A,4A$ map to the same group element/vertex. We choose the convention that the representative of this set's first non-zero entry is 1. For instance, one vertex, represented by $\identity$, corresponds to all the matrices
$$
\left\{
  \begin{bmatrix} 1 & 0 \\ 0 & 1 \end{bmatrix},\begin{bmatrix} 2 & 0 \\ 0 & 2 \end{bmatrix},\begin{bmatrix} 3 & 0 \\ 0 & 3 \end{bmatrix},\begin{bmatrix} 4 & 0 \\ 0 & 4 \end{bmatrix}
  \right\}.
$$

In order to determine the edges of the graph, we define a set of matrices $S_{3,5}$, with an edge between two vertices if their corresponding matrices $u,v$ satisfy $\sigma u=v$ for some $\sigma\in S_{3,5}$. To that end, we consider the $p+1=4$ integer solutions to the equation
$
a^2+b^2+c^2+d^2=3
$
where $a=0$ and $b=+1$:
$$
\!\!\!\!S_3\!=\!\{\!(0,1,1,1),\!(0,1,1,-1),\!(0,1,-1,1),\!(0,1,-1,-1)\!\}.
$$
Next, we pick an $x,y$ such that $x^2+y^2+1=0$ (mod $5$). For instance, $x=2,y=0$. This defines our set of matrices
$$
S_{3,5}=\left\{\begin{bmatrix} 2 & -3 \\ -1 & -2\end{bmatrix},\begin{bmatrix} 2 & -1 \\ -3 & -2\end{bmatrix},\begin{bmatrix} 2 & 3 \\ 1 & -2\end{bmatrix},\begin{bmatrix} 2 & 1 \\ 3 & -2\end{bmatrix}\right\}.
$$

The vertex $\identity$ is adjacent to the vertices $\sigma$ for $\sigma\in S_{3,5}$,
which have representatives
$$
\left\{\begin{bmatrix} 1 & 1 \\ 2 & 4\end{bmatrix},\begin{bmatrix} 1 & 2 \\ 1 & 4\end{bmatrix},\begin{bmatrix} 1 & 4 \\ 3 & 4\end{bmatrix},\begin{bmatrix} 1 & 3 \\ 4 & 4\end{bmatrix}\right\}.
$$

Consider the matrix
$$
H=\begin{bmatrix}1 & 1 \\ 0 & 1 \end{bmatrix}.
$$
Since $H^5=\identity$, for each vertex $u$, this creates an orbit of length $l=q=5$ comprising the vertices $A, AH,AH^2,AH^4,AH^4$. This permutation of the vertices defines the symmetry we require for the balanced product construction.

There are $q(q^2-1)=120$ vertices, each with degree 4 and, hence, there are $240$ edges.
The $120\times 240$ incidence matrix can directly be used to construct a balanced product code of 360 vertices. There are 26 logical qubits; the incidence matrix has a null space of dimension 120, but these are not all symmetric. Indeed, there are only 25 symmetric vectors (almost the minimum possible of $24=120/4$). On the left-hand block there is one additional vector (the all-ones). Some of these logical operators include the pairs
\begin{align*}
X_{L1}&=X_1X_2X_3X_4X_5 \\
Z_{L1}&=Z_1Z_2\ldots Z_{120} \\
X_{L2}&=X_{121}X_{122}\ldots X_{360} \\
Z_{L2}&=Z_{121}Z_{122}Z_{123}Z_{124}Z_{125}
\end{align*}
where the two length 5 operators are supported on a single orbit of $R,C$ respectively. The $X$ and $Z$ distances satisfy $d_X,d_Z\leq 5$. One can readily verify that these bounds are tight. Hence, the code is $[[360,26,5]]$. Ignoring the single logical qubit $\{X_{L1},Z_{L1}\}$ on the left-hand block would reduce to a $[[360,25,6,5]]$ subsystem code which is the analogue of what we have analysed in terms of distance in \cref{sec:gory}. The code details are explicitly described in \cite{kay2025a}, with the distance calculation being performed by \cite{kay2025}.
\end{example}

\subsection{Tanner Codes}

The original construction of the balanced product code \cite{breuckmann2021} does not only use expander matrices, but also uses a local code, combining the two in a Tanner code \cite{tanner1981}. We have shown in the previous arguments that this is in no way necessary. It does provide some advantage, however: if the local code has distance $d_l$, then the $X$ distance of the code is enhanced by a factor of approximately $d_l$ while the weights of some of the stabilizers are reduced (often reducing the maximum weight). For completeness, therefore, we include the construction here.

We start with $I_0$, an expander matrix with fixed row weight $s$ coming from, for example, an $s$-regular graph. It has symmetry operators $R_0$ and $C$, such that $R_0I_0=I_0C^T$. We now need to add an extra imposition on our choice of graph -- no two vertices in the same orbit of $R$ should share an edge of the graph.

To create a Tanner code, we supplement the expander with a `local' classical code $C_0$ that is $[s,k_l,d_l]$. The parity check of this local code has $r=s-k_l$ rows. For each row of $I_0$, we take the block of $s$ $1$s and replace it with the parity check matrix of $C_0$ (so, eventually, the $m$ rows are increased to $mr$ rows, leaving the number of columns fixed). In general for the Tanner code, each time you make a replacement, it is possible to pick any arbitrary permutation of the columns. This can lead to substantial variation in the code properties. However, we need our final matrix to have a symmetry. To do so, consider a vertex $v$ in the original graph (row of $I_0$). There are $s$ edges which connect to this vertex. Pick one of these edges, $e$. Imagine we assigned some column of $C_0$ to this particular edge. Then, when we need to assign columns of $C_0$ to $Rv$, we just have to do this consistently, so the same column that we assigned to $v,e$ must be assigned to $Rv,C^Te$. The only way this wouldn't be possible would be if, in looping through the orbit of $R$, we arrive at another vertex/edge pair that has already been assigned a column, i.e.\ if the graph has an edge between two vertices in the same orbit. We have already forbidden this. From this, we construct a new $I$ which has symmetry $R=R_0\otimes\identity_{r}$, i.e.\ $RI=IC^T$.

\begin{example}
Consider an initial matrix
$$
I_0=\begin{bmatrix} 1 & 1 & 0 & 0 & 1 & 0 \\ 0 & 0 & 1 & 1 & 0 & 1 \end{bmatrix}.
$$
Since this has row weights 3, we can consider introducing a local code of length 3 such as
$$
C_0=\begin{bmatrix} 1 & 1 & 0 \\ 0 & 1 & 1 \end{bmatrix}.
$$
We begin by applying to construction to the first row of $I_0$,
$$
\begin{bmatrix} 1 & 1 & 0 & 0 & 1 & 0 \end{bmatrix} \rightarrow 
\begin{tikzpicture}[baseline={([yshift=-2pt]m.center)}]
        \matrix [matrix of math nodes,left delimiter={[},right delimiter={]},inner sep=2pt] (m)
        {1 & 1 & 0 & 0 & 0 & 0 \\
0 & 1 & 0 & 0 & 1 & 0 \\};
\begin{scope}[on background layer]
 \node [fit={(m-1-1) (m-2-2)}, fill=blue!20,inner sep=1pt] {};
 \node [fit={(m-1-5) (m-2-5)}, fill=blue!20,inner sep=1pt] {};
 \end{scope}
\end{tikzpicture}
$$
where we have chosen a direct mapping without permuting the columns of $C_0$.

Now, we must preserve the symmetry of $I_0$ \footnote{For the sake of this example, we are not worrying about the two symmetry operators having different orders.}
\[
R_0=\begin{bmatrix} 0 & 1 \\ 1 & 0 \end{bmatrix},\qquad
C=\begin{tikzpicture}[baseline={([yshift=-2pt]m.center)}]
        \matrix [matrix of math nodes,left delimiter={[},right delimiter={]},inner sep=2pt,ampersand replacement=\&] (m)
        {0 \& 0 \& 0 \& 0 \& 0 \& 1 \\
 0 \& 0 \& 1 \& 0 \& 0 \& 0 \\
 1 \& 0 \& 0 \& 0 \& 0 \& 0 \\
 0 \& 0 \& 0 \& 0 \& 1 \& 0 \\
 0 \& 0 \& 0 \& 1 \& 0 \& 0 \\
 0 \& 1 \& 0 \& 0 \& 0 \& 0 \\};
\begin{scope}[on background layer]
 \node [fit={(m-2-3)}, fill=green!20,inner sep=1pt] {};
 \end{scope}
\end{tikzpicture}.
\]
This means that for the first vertex, where the first column got replaced with $\begin{bmatrix} 1 \\ 0 \end{bmatrix}$, in the second row it is the last column that requires a similar replacement. Ultimately, then, we will have
$$
I=\begin{tikzpicture}[baseline={([yshift=-2pt]m.center)}]
        \matrix [matrix of math nodes,left delimiter={[},right delimiter={]},inner sep=2pt] (m) {
1 & 1 & 0 & 0 & 0 & 0 \\
0 & 1 & 0 & 0 & 1 & 0 \\
0 & 0 & 1 & 0 & 0 & 1 \\
0 & 0 & 1 & 1 & 0 & 0 \\};
\begin{scope}[on background layer]
 \node [fit={(m-1-2) (m-2-2)}, fill=green!20,inner sep=1pt] {};
 \node [fit={(m-3-3) (m-4-3)}, fill=green!20,inner sep=1pt] {};
 \end{scope}
 \end{tikzpicture},
$$
invariant under the action of $C$ and $R=R_0\otimes\identity_2$.
\end{example}

\section{Summary}

In this paper, we have reviewed the construction of the balanced product code from \cite{breuckmann2021}, taking a very different perspective from the original presentation, which we hope is illuminating and will facilitate further study, such as the capabilities of transversal gates \cite{leitch}. We have concentrated on an extreme case of the construction and, in that case, have given the details of how the distance of the code can be seen to be growing more rapidly than $\sqrt{N}$. This is what made this result an essential building block on the road towards good quantum LDPC codes.

Ideally, we were aiming to beat the threshold of $kd^2\sim n^2$. Given the distance balancing, there is an equivalent threshold prior to to distance balancing, of $kd_Xd_Z\sim n^2$. For the balanced product, we have that $k\sim n/l$, $d_Z\sim n$ and $d_X\sim l$. Thus, whatever expander code we pick, no matter the degree of symmetry $l$, we will match, but not exceed, the threshold. However, it is worth noting that we have only bounded $k\geq k_0/l$ where $k_0\sim n$. There is potential to find cases where the performance far exceeds that, breaking the threshold. That said, all the examples that we explore in \cite{kay2025a} saturate the $k=k_0/l$ bound.

The essential elements in the construction require a binary matrix with some symmetry, from which we can directly express the parity-check matrices of a quantum code, \cref{def:balanced_product}. While not necessary, one of the distances can be improved by utilising the Tanner code construction (note, however, that this increases the number of rows in the initial parity-check matrix, which increases the number of physical qubits in the balanced product). The result is a code with a large number of logical qubits, and high, but asymmetric, $X$ and $Z$ distances. The effect of this asymmetry can be mitigated by the distance balancing procedure of \cref{def:distance_balance}.

This work was supported by the Engineering and Physical Sciences Research Council [grant number EP/Y004507/1]. We would like to thank the research group of Dan Browne at UCL for useful conversations.

\newpage

\begin{thebibliography}{35}%
\makeatletter
\providecommand \@ifxundefined [1]{%
 \@ifx{#1\undefined}
}%
\providecommand \@ifnum [1]{%
 \ifnum #1\expandafter \@firstoftwo
 \else \expandafter \@secondoftwo
 \fi
}%
\providecommand \@ifx [1]{%
 \ifx #1\expandafter \@firstoftwo
 \else \expandafter \@secondoftwo
 \fi
}%
\providecommand \natexlab [1]{#1}%
\providecommand \enquote  [1]{``#1''}%
\providecommand \bibnamefont  [1]{#1}%
\providecommand \bibfnamefont [1]{#1}%
\providecommand \citenamefont [1]{#1}%
\providecommand \href@noop [0]{\@secondoftwo}%
\providecommand \href [0]{\begingroup \@sanitize@url \@href}%
\providecommand \@href[1]{\@@startlink{#1}\@@href}%
\providecommand \@@href[1]{\endgroup#1\@@endlink}%
\providecommand \@sanitize@url [0]{\catcode `\\12\catcode `\$12\catcode `\&12\catcode `\#12\catcode `\^12\catcode `\_12\catcode `\%12\relax}%
\providecommand \@@startlink[1]{}%
\providecommand \@@endlink[0]{}%
\providecommand \url  [0]{\begingroup\@sanitize@url \@url }%
\providecommand \@url [1]{\endgroup\@href {#1}{\urlprefix }}%
\providecommand \urlprefix  [0]{URL }%
\providecommand \Eprint [0]{\href }%
\providecommand \doibase [0]{https://doi.org/}%
\providecommand \selectlanguage [0]{\@gobble}%
\providecommand \bibinfo  [0]{\@secondoftwo}%
\providecommand \bibfield  [0]{\@secondoftwo}%
\providecommand \translation [1]{[#1]}%
\providecommand \BibitemOpen [0]{}%
\providecommand \bibitemStop [0]{}%
\providecommand \bibitemNoStop [0]{.\EOS\space}%
\providecommand \EOS [0]{\spacefactor3000\relax}%
\providecommand \BibitemShut  [1]{\csname bibitem#1\endcsname}%
\let\auto@bib@innerbib\@empty
\bibitem [{\citenamefont {Breuckmann}\ and\ \citenamefont {Eberhardt}(2021)}]{breuckmann2021}%
  \BibitemOpen
  \bibfield  {author} {\bibinfo {author} {\bibfnamefont {N.~P.}\ \bibnamefont {Breuckmann}}\ and\ \bibinfo {author} {\bibfnamefont {J.~N.}\ \bibnamefont {Eberhardt}},\ }\bibfield  {title} {\bibinfo {title} {Balanced {{Product Quantum Codes}}},\ }\href {https://doi.org/10.1109/TIT.2021.3097347} {\bibfield  {journal} {\bibinfo  {journal} {IEEE Transactions on Information Theory}\ }\textbf {\bibinfo {volume} {67}},\ \bibinfo {pages} {6653} (\bibinfo {year} {2021})}\BibitemShut {NoStop}%
\bibitem [{\citenamefont {Shor}(1997)}]{shor1997a}%
  \BibitemOpen
  \bibfield  {author} {\bibinfo {author} {\bibfnamefont {P.~W.}\ \bibnamefont {Shor}},\ }\bibfield  {title} {\bibinfo {title} {Polynomial-{{Time Algorithms}} for {{Prime Factorization}} and {{Discrete Logarithms}} on a {{Quantum Computer}}},\ }\href {https://doi.org/10.1137/S0097539795293172} {\bibfield  {journal} {\bibinfo  {journal} {SIAM J. Comput.}\ }\textbf {\bibinfo {volume} {26}},\ \bibinfo {pages} {1484} (\bibinfo {year} {1997})}\BibitemShut {NoStop}%
\bibitem [{\citenamefont {Grover}(1996)}]{grover1996}%
  \BibitemOpen
  \bibfield  {author} {\bibinfo {author} {\bibfnamefont {L.~K.}\ \bibnamefont {Grover}},\ }\bibfield  {title} {\bibinfo {title} {A {{Fast Quantum Mechanical Algorithm}} for {{Database Search}}},\ }in\ \href {https://doi.org/10.1145/237814.237866} {\emph {\bibinfo {booktitle} {Proceedings of the {{Twenty-eighth Annual ACM Symposium}} on {{Theory}} of {{Computing}}}}},\ \bibinfo {series and number} {{{STOC}} '96}\ (\bibinfo  {publisher} {ACM},\ \bibinfo {address} {New York, NY, USA},\ \bibinfo {year} {1996})\ pp.\ \bibinfo {pages} {212--219}\BibitemShut {NoStop}%
\bibitem [{\citenamefont {Harrow}\ \emph {et~al.}(2009)\citenamefont {Harrow}, \citenamefont {Hassidim},\ and\ \citenamefont {Lloyd}}]{harrow2009}%
  \BibitemOpen
  \bibfield  {author} {\bibinfo {author} {\bibfnamefont {A.~W.}\ \bibnamefont {Harrow}}, \bibinfo {author} {\bibfnamefont {A.}~\bibnamefont {Hassidim}},\ and\ \bibinfo {author} {\bibfnamefont {S.}~\bibnamefont {Lloyd}},\ }\bibfield  {title} {\bibinfo {title} {Quantum {{Algorithm}} for {{Linear Systems}} of {{Equations}}},\ }\href {https://doi.org/10.1103/PhysRevLett.103.150502} {\bibfield  {journal} {\bibinfo  {journal} {Phys. Rev. Lett.}\ }\textbf {\bibinfo {volume} {103}},\ \bibinfo {pages} {150502} (\bibinfo {year} {2009})}\BibitemShut {NoStop}%
\bibitem [{\citenamefont {Berry}\ \emph {et~al.}(2015)\citenamefont {Berry}, \citenamefont {Childs},\ and\ \citenamefont {Kothari}}]{berry2015a}%
  \BibitemOpen
  \bibfield  {author} {\bibinfo {author} {\bibfnamefont {D.~W.}\ \bibnamefont {Berry}}, \bibinfo {author} {\bibfnamefont {A.~M.}\ \bibnamefont {Childs}},\ and\ \bibinfo {author} {\bibfnamefont {R.}~\bibnamefont {Kothari}},\ }\bibfield  {title} {\bibinfo {title} {Hamiltonian simulation with nearly optimal dependence on all parameters},\ }in\ \href {https://doi.org/10.1109/FOCS.2015.54} {\emph {\bibinfo {booktitle} {2015 {{IEEE}} 56th {{Annual Symposium}} on {{Foundations}} of {{Computer Science}}}}}\ (\bibinfo {year} {2015})\ pp.\ \bibinfo {pages} {792--809},\ \Eprint {https://arxiv.org/abs/1501.01715} {arXiv:1501.01715 [quant-ph]} \BibitemShut {NoStop}%
\bibitem [{\citenamefont {Arute}\ and\ \citenamefont {{others}}(2019)}]{arute2019}%
  \BibitemOpen
  \bibfield  {author} {\bibinfo {author} {\bibfnamefont {F.}~\bibnamefont {Arute}}\ and\ \bibinfo {author} {\bibnamefont {{others}}},\ }\bibfield  {title} {\bibinfo {title} {Quantum supremacy using a programmable superconducting processor},\ }\href {https://doi.org/10.1038/s41586-019-1666-5} {\bibfield  {journal} {\bibinfo  {journal} {Nature}\ }\textbf {\bibinfo {volume} {574}},\ \bibinfo {pages} {505} (\bibinfo {year} {2019})}\BibitemShut {NoStop}%
\bibitem [{\citenamefont {Zhong}\ and\ \citenamefont {{others}}(2020)}]{zhong2020}%
  \BibitemOpen
  \bibfield  {author} {\bibinfo {author} {\bibfnamefont {H.-S.}\ \bibnamefont {Zhong}}\ and\ \bibinfo {author} {\bibnamefont {{others}}},\ }\bibfield  {title} {\bibinfo {title} {Quantum computational advantage using photons},\ }\href {https://doi.org/10.1126/science.abe8770} {\bibfield  {journal} {\bibinfo  {journal} {Science}\ }\textbf {\bibinfo {volume} {370}},\ \bibinfo {pages} {1460} (\bibinfo {year} {2020})}\BibitemShut {NoStop}%
\bibitem [{\citenamefont {Madsen}\ and\ \citenamefont {{others}}(2022)}]{madsen2022}%
  \BibitemOpen
  \bibfield  {author} {\bibinfo {author} {\bibfnamefont {L.~S.}\ \bibnamefont {Madsen}}\ and\ \bibinfo {author} {\bibnamefont {{others}}},\ }\bibfield  {title} {\bibinfo {title} {Quantum computational advantage with a programmable photonic processor},\ }\href {https://doi.org/10.1038/s41586-022-04725-x} {\bibfield  {journal} {\bibinfo  {journal} {Nature}\ }\textbf {\bibinfo {volume} {606}},\ \bibinfo {pages} {75} (\bibinfo {year} {2022})}\BibitemShut {NoStop}%
\bibitem [{\citenamefont {Sivak}\ \emph {et~al.}(2023)\citenamefont {Sivak}, \citenamefont {Eickbusch}, \citenamefont {Royer}, \citenamefont {Singh}, \citenamefont {Tsioutsios}, \citenamefont {Ganjam}, \citenamefont {Miano}, \citenamefont {Brock}, \citenamefont {Ding}, \citenamefont {Frunzio}, \citenamefont {Girvin}, \citenamefont {Schoelkopf},\ and\ \citenamefont {Devoret}}]{sivak2023}%
  \BibitemOpen
  \bibfield  {author} {\bibinfo {author} {\bibfnamefont {V.~V.}\ \bibnamefont {Sivak}}, \bibinfo {author} {\bibfnamefont {A.}~\bibnamefont {Eickbusch}}, \bibinfo {author} {\bibfnamefont {B.}~\bibnamefont {Royer}}, \bibinfo {author} {\bibfnamefont {S.}~\bibnamefont {Singh}}, \bibinfo {author} {\bibfnamefont {I.}~\bibnamefont {Tsioutsios}}, \bibinfo {author} {\bibfnamefont {S.}~\bibnamefont {Ganjam}}, \bibinfo {author} {\bibfnamefont {A.}~\bibnamefont {Miano}}, \bibinfo {author} {\bibfnamefont {B.~L.}\ \bibnamefont {Brock}}, \bibinfo {author} {\bibfnamefont {A.~Z.}\ \bibnamefont {Ding}}, \bibinfo {author} {\bibfnamefont {L.}~\bibnamefont {Frunzio}}, \bibinfo {author} {\bibfnamefont {S.~M.}\ \bibnamefont {Girvin}}, \bibinfo {author} {\bibfnamefont {R.~J.}\ \bibnamefont {Schoelkopf}},\ and\ \bibinfo {author} {\bibfnamefont {M.~H.}\ \bibnamefont {Devoret}},\ }\bibfield  {title} {\bibinfo {title} {Real-time quantum error correction beyond break-even},\ }\href {https://doi.org/10.1038/s41586-023-05782-6} {\bibfield
  {journal} {\bibinfo  {journal} {Nature}\ }\textbf {\bibinfo {volume} {616}},\ \bibinfo {pages} {50} (\bibinfo {year} {2023})}\BibitemShut {NoStop}%
\bibitem [{\citenamefont {Laflamme}\ \emph {et~al.}(1996)\citenamefont {Laflamme}, \citenamefont {Miquel}, \citenamefont {Paz},\ and\ \citenamefont {Zurek}}]{laflamme1996}%
  \BibitemOpen
  \bibfield  {author} {\bibinfo {author} {\bibfnamefont {R.}~\bibnamefont {Laflamme}}, \bibinfo {author} {\bibfnamefont {C.}~\bibnamefont {Miquel}}, \bibinfo {author} {\bibfnamefont {J.~P.}\ \bibnamefont {Paz}},\ and\ \bibinfo {author} {\bibfnamefont {W.~H.}\ \bibnamefont {Zurek}},\ }\bibfield  {title} {\bibinfo {title} {Perfect {{Quantum Error Correcting Code}}},\ }\href {https://doi.org/10.1103/PhysRevLett.77.198} {\bibfield  {journal} {\bibinfo  {journal} {Phys. Rev. Lett.}\ }\textbf {\bibinfo {volume} {77}},\ \bibinfo {pages} {198} (\bibinfo {year} {1996})}\BibitemShut {NoStop}%
\bibitem [{\citenamefont {Steane}(1996)}]{steane1996a}%
  \BibitemOpen
  \bibfield  {author} {\bibinfo {author} {\bibfnamefont {A.~M.}\ \bibnamefont {Steane}},\ }\bibfield  {title} {\bibinfo {title} {Simple quantum error-correcting codes},\ }\href {https://doi.org/10.1103/PhysRevA.54.4741} {\bibfield  {journal} {\bibinfo  {journal} {Phys. Rev. A}\ }\textbf {\bibinfo {volume} {54}},\ \bibinfo {pages} {4741} (\bibinfo {year} {1996})}\BibitemShut {NoStop}%
\bibitem [{\citenamefont {Aliferis}\ \emph {et~al.}(2006)\citenamefont {Aliferis}, \citenamefont {Gottesman},\ and\ \citenamefont {Preskill}}]{aliferis2006}%
  \BibitemOpen
  \bibfield  {author} {\bibinfo {author} {\bibfnamefont {P.}~\bibnamefont {Aliferis}}, \bibinfo {author} {\bibfnamefont {D.}~\bibnamefont {Gottesman}},\ and\ \bibinfo {author} {\bibfnamefont {J.}~\bibnamefont {Preskill}},\ }\bibfield  {title} {\bibinfo {title} {Quantum accuracy threshold for concatenated distance-3 codes},\ }\href@noop {} {\bibfield  {journal} {\bibinfo  {journal} {Quant. Inf. Comput.}\ }\textbf {\bibinfo {volume} {6}},\ \bibinfo {pages} {97} (\bibinfo {year} {2006})}\BibitemShut {NoStop}%
\bibitem [{\citenamefont {Kitaev}(2003)}]{kitaev2003}%
  \BibitemOpen
  \bibfield  {author} {\bibinfo {author} {\bibfnamefont {{\relax A.Yu}.}~\bibnamefont {Kitaev}},\ }\bibfield  {title} {\bibinfo {title} {Fault-tolerant quantum computation by anyons},\ }\href {https://doi.org/10.1016/S0003-4916(02)00018-0} {\bibfield  {journal} {\bibinfo  {journal} {Annals of Physics}\ }\textbf {\bibinfo {volume} {303}},\ \bibinfo {pages} {2} (\bibinfo {year} {2003})}\BibitemShut {NoStop}%
\bibitem [{\citenamefont {Yamasaki}\ and\ \citenamefont {Koashi}(2024)}]{yamasaki2024}%
  \BibitemOpen
  \bibfield  {author} {\bibinfo {author} {\bibfnamefont {H.}~\bibnamefont {Yamasaki}}\ and\ \bibinfo {author} {\bibfnamefont {M.}~\bibnamefont {Koashi}},\ }\bibfield  {title} {\bibinfo {title} {Time-{{Efficient Constant-Space-Overhead Fault-Tolerant Quantum Computation}}},\ }\href {https://doi.org/10.1038/s41567-023-02325-8} {\bibfield  {journal} {\bibinfo  {journal} {Nat. Phys.}\ }\textbf {\bibinfo {volume} {20}},\ \bibinfo {pages} {247} (\bibinfo {year} {2024})}\BibitemShut {NoStop}%
\bibitem [{\citenamefont {Bravyi}\ \emph {et~al.}(2010)\citenamefont {Bravyi}, \citenamefont {Poulin},\ and\ \citenamefont {Terhal}}]{bravyi2010b}%
  \BibitemOpen
  \bibfield  {author} {\bibinfo {author} {\bibfnamefont {S.}~\bibnamefont {Bravyi}}, \bibinfo {author} {\bibfnamefont {D.}~\bibnamefont {Poulin}},\ and\ \bibinfo {author} {\bibfnamefont {B.}~\bibnamefont {Terhal}},\ }\bibfield  {title} {\bibinfo {title} {Tradeoffs for reliable quantum information storage in {{2D}} systems},\ }\href {https://doi.org/10.1103/PhysRevLett.104.050503} {\bibfield  {journal} {\bibinfo  {journal} {Phys. Rev. Lett.}\ }\textbf {\bibinfo {volume} {104}},\ \bibinfo {pages} {050503} (\bibinfo {year} {2010})},\ \Eprint {https://arxiv.org/abs/0909.5200} {arXiv:0909.5200 [quant-ph]} \BibitemShut {NoStop}%
\bibitem [{\citenamefont {Bluvstein}\ \emph {et~al.}(2024)\citenamefont {Bluvstein}, \citenamefont {Evered}, \citenamefont {Geim}, \citenamefont {Li}, \citenamefont {Zhou}, \citenamefont {Manovitz}, \citenamefont {Ebadi}, \citenamefont {Cain}, \citenamefont {Kalinowski}, \citenamefont {Hangleiter}, \citenamefont {Bonilla~Ataides}, \citenamefont {Maskara}, \citenamefont {Cong}, \citenamefont {Gao}, \citenamefont {Sales~Rodriguez}, \citenamefont {Karolyshyn}, \citenamefont {Semeghini}, \citenamefont {Gullans}, \citenamefont {Greiner}, \citenamefont {Vuleti{\'c}},\ and\ \citenamefont {Lukin}}]{bluvstein2024}%
  \BibitemOpen
  \bibfield  {author} {\bibinfo {author} {\bibfnamefont {D.}~\bibnamefont {Bluvstein}}, \bibinfo {author} {\bibfnamefont {S.~J.}\ \bibnamefont {Evered}}, \bibinfo {author} {\bibfnamefont {A.~A.}\ \bibnamefont {Geim}}, \bibinfo {author} {\bibfnamefont {S.~H.}\ \bibnamefont {Li}}, \bibinfo {author} {\bibfnamefont {H.}~\bibnamefont {Zhou}}, \bibinfo {author} {\bibfnamefont {T.}~\bibnamefont {Manovitz}}, \bibinfo {author} {\bibfnamefont {S.}~\bibnamefont {Ebadi}}, \bibinfo {author} {\bibfnamefont {M.}~\bibnamefont {Cain}}, \bibinfo {author} {\bibfnamefont {M.}~\bibnamefont {Kalinowski}}, \bibinfo {author} {\bibfnamefont {D.}~\bibnamefont {Hangleiter}}, \bibinfo {author} {\bibfnamefont {J.~P.}\ \bibnamefont {Bonilla~Ataides}}, \bibinfo {author} {\bibfnamefont {N.}~\bibnamefont {Maskara}}, \bibinfo {author} {\bibfnamefont {I.}~\bibnamefont {Cong}}, \bibinfo {author} {\bibfnamefont {X.}~\bibnamefont {Gao}}, \bibinfo {author} {\bibfnamefont {P.}~\bibnamefont {Sales~Rodriguez}}, \bibinfo {author} {\bibfnamefont
  {T.}~\bibnamefont {Karolyshyn}}, \bibinfo {author} {\bibfnamefont {G.}~\bibnamefont {Semeghini}}, \bibinfo {author} {\bibfnamefont {M.~J.}\ \bibnamefont {Gullans}}, \bibinfo {author} {\bibfnamefont {M.}~\bibnamefont {Greiner}}, \bibinfo {author} {\bibfnamefont {V.}~\bibnamefont {Vuleti{\'c}}},\ and\ \bibinfo {author} {\bibfnamefont {M.~D.}\ \bibnamefont {Lukin}},\ }\bibfield  {title} {\bibinfo {title} {Logical quantum processor based on reconfigurable atom arrays},\ }\href {https://doi.org/10.1038/s41586-023-06927-3} {\bibfield  {journal} {\bibinfo  {journal} {Nature}\ }\textbf {\bibinfo {volume} {626}},\ \bibinfo {pages} {58} (\bibinfo {year} {2024})}\BibitemShut {NoStop}%
\bibitem [{\citenamefont {Tillich}\ and\ \citenamefont {Zemor}(2014)}]{tillich2014}%
  \BibitemOpen
  \bibfield  {author} {\bibinfo {author} {\bibfnamefont {J.-P.}\ \bibnamefont {Tillich}}\ and\ \bibinfo {author} {\bibfnamefont {G.}~\bibnamefont {Zemor}},\ }\bibfield  {title} {\bibinfo {title} {Quantum {{LDPC}} codes with positive rate and minimum distance proportional to {$n^{\frac12}$}},\ }\href {https://doi.org/10.1109/TIT.2013.2292061} {\bibfield  {journal} {\bibinfo  {journal} {IEEE Trans. Inform. Theory}\ }\textbf {\bibinfo {volume} {60}},\ \bibinfo {pages} {1193} (\bibinfo {year} {2014})},\ \Eprint {https://arxiv.org/abs/0903.0566} {arXiv:0903.0566 [quant-ph]} \BibitemShut {NoStop}%
\bibitem [{\citenamefont {Panteleev}\ and\ \citenamefont {Kalachev}(2022{\natexlab{a}})}]{panteleev2022}%
  \BibitemOpen
  \bibfield  {author} {\bibinfo {author} {\bibfnamefont {P.}~\bibnamefont {Panteleev}}\ and\ \bibinfo {author} {\bibfnamefont {G.}~\bibnamefont {Kalachev}},\ }\bibfield  {title} {\bibinfo {title} {Asymptotically good {{Quantum}} and locally testable classical {{LDPC}} codes},\ }in\ \href {https://doi.org/10.1145/3519935.3520017} {\emph {\bibinfo {booktitle} {Proceedings of the 54th {{Annual ACM SIGACT Symposium}} on {{Theory}} of {{Computing}}}}},\ \bibinfo {series and number} {{{STOC}} 2022}\ (\bibinfo  {publisher} {Association for Computing Machinery},\ \bibinfo {address} {New York, NY, USA},\ \bibinfo {year} {2022})\ pp.\ \bibinfo {pages} {375--388},\ \Eprint {https://arxiv.org/abs/2111.03654} {arXiv:2111.03654 [quant-ph]} \BibitemShut {NoStop}%
\bibitem [{\citenamefont {Panteleev}\ and\ \citenamefont {Kalachev}(2022{\natexlab{b}})}]{panteleev2022a}%
  \BibitemOpen
  \bibfield  {author} {\bibinfo {author} {\bibfnamefont {P.}~\bibnamefont {Panteleev}}\ and\ \bibinfo {author} {\bibfnamefont {G.}~\bibnamefont {Kalachev}},\ }\bibfield  {title} {\bibinfo {title} {Quantum {{LDPC Codes}} with {{Almost Linear Minimum Distance}}},\ }\href {https://doi.org/10.1109/TIT.2021.3119384} {\bibfield  {journal} {\bibinfo  {journal} {IEEE Trans. Inform. Theory}\ }\textbf {\bibinfo {volume} {68}},\ \bibinfo {pages} {213} (\bibinfo {year} {2022}{\natexlab{b}})},\ \Eprint {https://arxiv.org/abs/2012.04068} {arXiv:2012.04068 [quant-ph]} \BibitemShut {NoStop}%
\bibitem [{\citenamefont {Evra}\ \emph {et~al.}(2022)\citenamefont {Evra}, \citenamefont {Kaufman},\ and\ \citenamefont {Z{\'e}mor}}]{evra2022}%
  \BibitemOpen
  \bibfield  {author} {\bibinfo {author} {\bibfnamefont {S.}~\bibnamefont {Evra}}, \bibinfo {author} {\bibfnamefont {T.}~\bibnamefont {Kaufman}},\ and\ \bibinfo {author} {\bibfnamefont {G.}~\bibnamefont {Z{\'e}mor}},\ }\bibfield  {title} {\bibinfo {title} {Decodable {{Quantum LDPC Codes}} beyond the $\sqrt{n}$ {{Distance Barrier Using High-Dimensional Expanders}}},\ }\href {https://doi.org/10.1137/20M1383689} {\bibfield  {journal} {\bibinfo  {journal} {SIAM J. Comput.}\ ,\ \bibinfo {pages} {FOCS20}} (\bibinfo {year} {2022})}\BibitemShut {NoStop}%
\bibitem [{\citenamefont {Hastings}\ \emph {et~al.}(2021)\citenamefont {Hastings}, \citenamefont {Haah},\ and\ \citenamefont {O'Donnell}}]{hastings2021}%
  \BibitemOpen
  \bibfield  {author} {\bibinfo {author} {\bibfnamefont {M.~B.}\ \bibnamefont {Hastings}}, \bibinfo {author} {\bibfnamefont {J.}~\bibnamefont {Haah}},\ and\ \bibinfo {author} {\bibfnamefont {R.}~\bibnamefont {O'Donnell}},\ }\bibfield  {title} {\bibinfo {title} {Fiber {{Bundle Codes}}: {{Breaking}} the {$N^{1/2} \operatorname{polylog}(N)$} {{Barrier}} for {{Quantum LDPC Codes}}},\ }in\ \href {https://doi.org/10.1145/3406325.3451005} {\emph {\bibinfo {booktitle} {Proceedings of the 53rd {{Annual ACM SIGACT Symposium}} on {{Theory}} of {{Computing}}}}}\ (\bibinfo {year} {2021})\ pp.\ \bibinfo {pages} {1276--1288},\ \Eprint {https://arxiv.org/abs/2009.03921} {arXiv:2009.03921 [quant-ph]} \BibitemShut {NoStop}%
\bibitem [{\citenamefont {Lubotzky}\ \emph {et~al.}(1988)\citenamefont {Lubotzky}, \citenamefont {Phillips},\ and\ \citenamefont {Sarnak}}]{lubotzky1988}%
  \BibitemOpen
  \bibfield  {author} {\bibinfo {author} {\bibfnamefont {A.}~\bibnamefont {Lubotzky}}, \bibinfo {author} {\bibfnamefont {R.}~\bibnamefont {Phillips}},\ and\ \bibinfo {author} {\bibfnamefont {P.}~\bibnamefont {Sarnak}},\ }\bibfield  {title} {\bibinfo {title} {Ramanujan graphs},\ }\href {https://doi.org/10.1007/BF02126799} {\bibfield  {journal} {\bibinfo  {journal} {Combinatorica}\ }\textbf {\bibinfo {volume} {8}},\ \bibinfo {pages} {261} (\bibinfo {year} {1988})}\BibitemShut {NoStop}%
\bibitem [{\citenamefont {Roffe}(2019)}]{roffe2019}%
  \BibitemOpen
  \bibfield  {author} {\bibinfo {author} {\bibfnamefont {J.}~\bibnamefont {Roffe}},\ }\bibfield  {title} {\bibinfo {title} {Quantum error correction: An introductory guide},\ }\href {https://doi.org/10.1080/00107514.2019.1667078} {\bibfield  {journal} {\bibinfo  {journal} {Contemporary Physics}\ }\textbf {\bibinfo {volume} {60}},\ \bibinfo {pages} {226} (\bibinfo {year} {2019})}\BibitemShut {NoStop}%
\bibitem [{\citenamefont {Calderbank}\ and\ \citenamefont {Shor}(1996)}]{calderbank1996}%
  \BibitemOpen
  \bibfield  {author} {\bibinfo {author} {\bibfnamefont {A.~R.}\ \bibnamefont {Calderbank}}\ and\ \bibinfo {author} {\bibfnamefont {P.~W.}\ \bibnamefont {Shor}},\ }\bibfield  {title} {\bibinfo {title} {Good quantum error-correcting codes exist},\ }\href {https://doi.org/10.1103/PhysRevA.54.1098} {\bibfield  {journal} {\bibinfo  {journal} {Phys. Rev. A}\ }\textbf {\bibinfo {volume} {54}},\ \bibinfo {pages} {1098} (\bibinfo {year} {1996})}\BibitemShut {NoStop}%
\bibitem [{Note1()}]{Note1}%
  \BibitemOpen
  \bibinfo {note} {Beware: in what follows, the rows are not linearly independent.}\BibitemShut {Stop}%
\bibitem [{\citenamefont {Kay}\ and\ \citenamefont {Leitch}(2025{\natexlab{a}})}]{kay2025}%
  \BibitemOpen
  \bibfield  {author} {\bibinfo {author} {\bibfnamefont {A.}~\bibnamefont {Kay}}\ and\ \bibinfo {author} {\bibfnamefont {H.}~\bibnamefont {Leitch}},\ }\href@noop {} {\bibinfo {title} {Sat-qecc}},\ \bibinfo {howpublished} {https://doi.org/10.17637/rh.29085299} (\bibinfo {year} {2025}{\natexlab{a}})\BibitemShut {NoStop}%
\bibitem [{\citenamefont {Kay}\ and\ \citenamefont {Leitch}(2025{\natexlab{b}})}]{kay2025a}%
  \BibitemOpen
  \bibfield  {author} {\bibinfo {author} {\bibfnamefont {A.}~\bibnamefont {Kay}}\ and\ \bibinfo {author} {\bibfnamefont {H.}~\bibnamefont {Leitch}},\ }\href@noop {} {\bibinfo {title} {Balanced {{Product Code Examples}}}},\ \bibinfo {howpublished} {https://doi.org/10.17637/rh.29085380} (\bibinfo {year} {2025}{\natexlab{b}})\BibitemShut {NoStop}%
\bibitem [{\citenamefont {Gallager}(1963)}]{gallager1963}%
  \BibitemOpen
  \bibfield  {author} {\bibinfo {author} {\bibfnamefont {R.~G.}\ \bibnamefont {Gallager}},\ }\href {https://doi.org/10.7551/mitpress/4347.001.0001} {\emph {\bibinfo {title} {Low-{{Density Parity-Check Codes}}}}}\ (\bibinfo  {publisher} {The MIT Press},\ \bibinfo {year} {1963})\BibitemShut {NoStop}%
\bibitem [{Note2()}]{Note2}%
  \BibitemOpen
  \bibinfo {note} {Indeed, this would still be true if we replaced $\protect \identity +C$ and $\protect \identity +R$ with $f(C)$ and $f(R)$ respectively, where $f(x)$ is a low weight polynomial with binary coefficients.}\BibitemShut {Stop}%
\bibitem [{Note3()}]{Note3}%
  \BibitemOpen
  \bibinfo {note} {Assuming a large enough code $2\beta \gamma _Xl<\alpha n$ that a token case can be neglected.}\BibitemShut {Stop}%
\bibitem [{Note4()}]{Note4}%
  \BibitemOpen
  \bibinfo {note} {Commonly, $I$ has a fixed row weight, so it has a null vector comprising the all-ones. This is often the only one.}\BibitemShut {Stop}%
\bibitem [{Note5()}]{Note5}%
  \BibitemOpen
  \bibinfo {note} {From a programming perspective, apply row reduction modulo 2. Those vertices are identified by the $\protect \identity $ block.}\BibitemShut {Stop}%
\bibitem [{\citenamefont {Tanner}(1981)}]{tanner1981}%
  \BibitemOpen
  \bibfield  {author} {\bibinfo {author} {\bibfnamefont {R.}~\bibnamefont {Tanner}},\ }\bibfield  {title} {\bibinfo {title} {A recursive approach to low complexity codes},\ }\href {https://doi.org/10.1109/TIT.1981.1056404} {\bibfield  {journal} {\bibinfo  {journal} {IEEE Transactions on Information Theory}\ }\textbf {\bibinfo {volume} {27}},\ \bibinfo {pages} {533} (\bibinfo {year} {1981})}\BibitemShut {NoStop}%
\bibitem [{a()}]{a}%
  \BibitemOpen
  \bibinfo {note} {For the sake of this example, we are not worrying about the two symmetry operators having different orders.}\BibitemShut {Stop}%
\bibitem [{\citenamefont {Leitch}\ and\ \citenamefont {Kay}()}]{leitch}%
  \BibitemOpen
  \bibfield  {author} {\bibinfo {author} {\bibfnamefont {H.}~\bibnamefont {Leitch}}\ and\ \bibinfo {author} {\bibfnamefont {A.}~\bibnamefont {Kay}},\ }\bibfield  {title} {\bibinfo {title} {Transversal {{Gates}} for {{Highly Asymmetric qLDPC Codes}}},\ }\href@noop {} {\bibinfo  {journal} {in preparation}\ }\BibitemShut {NoStop}%
\end{thebibliography}
%

\clearpage
\onecolumngrid
\appendix
\section{Where's the Product?}\label{sec:missing_product}

Throughout this paper, the product part of the balanced product code has been conspicuous in its absence. Where did it go? We have only considered the product between one (mostly) arbitrary graph, and a fixed graph (the cycle of length equal to the orbit length of the first graph), corresponding to the primary focus of \cite{breuckmann2021}. As such, we have been able to immediately calculate the effect of the product, and simply start from a point where the product has already been evaluated. In the following, we will justify our assertions, while giving more general results and seeing how they connect with, for example, the hypergraph product codes.

\begin{theorem}
Let $I_X\in\{0,1\}^{l\times n_X,l\times m_X}$ and $I_Y\in\{0,1\}^{n_Y\times l,m_Y\times l}$ be two matrices that satisfy
\[
(R\otimes\identity_{n_X})I_X=I_X(R\otimes\identity_{m_X}), \qquad
(\identity_{n_Y}\otimes R)I_Y=I_Y(\identity_{m_Y}\otimes R)    
\]
where $R$ is an $l\times l$ permutation matrix. The balanced product defines the pair of parity check matrices
\[
H_X=\begin{bmatrix} \identity_{n_Y}\otimes I_X & I_Y\otimes \identity_{n_X}\end{bmatrix}, \qquad
H_Z=\begin{bmatrix} I_Y^T\otimes\identity_{m_X} & \identity_{m_Y}\otimes I_X^T \end{bmatrix}.  
\]
\end{theorem}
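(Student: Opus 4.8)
The theorem as displayed presents $H_X$ and $H_Z$ without separately naming its conclusion; the thing to prove is that these form a legitimate CSS code, $H_XH_Z^T=0$ over $\mathbb{F}_2$, and — matching the heading of this appendix — that \cref{def:balanced_product} and the hypergraph product sit inside the construction as the two opposite extremes. The plan is to make the orthogonality identity the single genuine calculation and to dispatch the two specialisations as a short dictionary at the end.

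For orthogonality, the first step is purely formal: since $(A\otimes B)^T=A^T\otimes B^T$ and $\identity^T=\identity$, the matrix $H_Z^T$ has first block $I_Y\otimes\identity_{m_X}$ and second block $\identity_{m_Y}\otimes I_X$, so block multiplication gives
\[
H_XH_Z^T=\bigl(\identity_{n_Y}\otimes I_X\bigr)\bigl(I_Y\otimes\identity_{m_X}\bigr)+\bigl(I_Y\otimes\identity_{n_X}\bigr)\bigl(\identity_{m_Y}\otimes I_X\bigr).
\]
The crux is the observation that, reading the three tensor slots of each term in the order ($Y$-slot, $l$-slot, $X$-slot), the factor $\identity\otimes I_X$ acts on the last two slots, the factor $I_Y\otimes\identity$ acts on the first two slots, and the only slot they share is the middle one, $\mathbb{F}_2^l$; on its outer slot each factor is a spectator. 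Taking the entry of either term between an $X$-check $(y,\lambda,x)$ and a $Z$-check $(y'',\lambda'',x'')$ and summing over the shared qubit index, the first term contributes $\sum_{\mu}(I_X)_{(\lambda,x),(\mu,x'')}\,(I_Y)_{(y,\mu),(y'',\lambda'')}$ and the second contributes $\sum_{\mu}(I_Y)_{(y,\lambda),(y'',\mu)}\,(I_X)_{(\mu,x),(\lambda'',x'')}$. With the outer indices $y,y'',x,x''$ held fixed these are precisely $(PQ)_{\lambda\lambda''}$ and $(QP)_{\lambda\lambda''}$, where $P,Q\in\mathbb{F}_2^{l\times l}$ are what $I_X$ and $I_Y$ become once those outer indices are frozen.

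Now the symmetry hypotheses do the work: $(R\otimes\identity)I_X=I_X(R\otimes\identity)$ forces $PR=RP$, and $(\identity\otimes R)I_Y=I_Y(\identity\otimes R)$ forces $QR=RQ$. Since $R$ is an $l$-cycle (the relevant case for the balanced product over the cyclic group $\mathbb{Z}_l$), it is non-derogatory, so its commutant inside the matrix algebra $\mathbb{F}_2^{l\times l}$ is the \emph{commutative} algebra $\mathbb{F}_2[R]\cong\mathbb{F}_2[x]/(x^l+1)$. Hence $P$ and $Q$ commute, the two terms above agree entry by entry, and they cancel modulo $2$. I would also record the slicker repackaging of exactly this fact: the two symmetry relations say that $I_X$ and $I_Y$ may be regarded as matrices over the \emph{commutative} ring $\mathbb{F}_2[\mathbb{Z}_l]$, whereupon $H_X$ and $H_Z$ are literally the hypergraph-product parity checks built over that ring, for which $H_XH_Z^T=0$ is the familiar one-line identity; the index computation above is just that identity written out.

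The genuine friction I anticipate is bookkeeping rather than content: being explicit that $\identity_{n_Y}\otimes I_X$ and $I_Y\otimes\identity_{n_X}$ act on one common $\mathbb{F}_2^{n_Y}\!\otimes\mathbb{F}_2^l\otimes\mathbb{F}_2^{n_X}$-type space with the middle factor identified (the re-association of the triple tensor product that the compact notation conceals), and keeping straight which of $n_X,m_X,n_Y,m_Y$ labels each slot on each side of every product; once those identifications are pinned down, the computation is immediate. To finish the theorem I would append the dictionary of extremes: the $l=1$ case (trivial symmetry group) is exactly the hypergraph product of $I_Y$ and $I_X$, up to relabelling the two qubit blocks; and \cref{def:balanced_product} is the case $n_Y=m_Y=1$ — the second factor being a single-edge graph, which is why a repetition-code block $\identity+R$ appears — obtained by taking $I_Y=\identity_l+R$ and $I_X=I^T$, after identifying the $R$ and $C$ of \cref{def:balanced_product} with their tensor forms in bases that group rows and columns by orbit. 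For completeness I would also note that the orthogonality can genuinely fail if $R$ is a permutation matrix whose commutant is noncommutative, so it is the $l$-cycle structure, and not merely ``order dividing $l$'', that is being used.
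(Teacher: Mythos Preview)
Your orthogonality argument is correct and is essentially the paper's own: the paper writes $I_X=\sum_{i,j}f_{ij}(R)\otimes\ket{i}\bra{j}$ and $I_Y=\sum_{i,j}\ket{i}\bra{j}\otimes g_{ij}(R)$ and then uses that polynomials in $R$ commute, which is your $P,Q\in\mathbb{F}_2[R]$ observation with the outer indices frozen. Your repackaging as ``hypergraph product over the commutative ring $\mathbb{F}_2[\mathbb{Z}_l]$'' is exactly the right picture, and your remark that the argument needs $R$ to have commutative commutant (so an $l$-cycle, not an arbitrary permutation) is a point the paper leaves implicit. Your dictionary for the two extremes also matches the paper's.

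However, you have misread what the paper regards as the content of the proof. The orthogonality check and the two specialisations are dispatched in the paragraphs \emph{between} the theorem statement and the proof environment. The formal proof is entirely devoted to something you do not touch: justifying the name ``balanced product'' by constructing it as a graph operation. The paper takes two graphs $X,Y$, each carrying a free $\mathbb{Z}_l$-action, forms the quotient of their Cartesian product by the diagonal action --- vertices are equivalence classes $z_{a,b,c}=\{y_{a,-i}\otimes x_{i+b,c}\}_{i=0}^{l-1}$ --- and then shows by direct inspection of which classes inherit edges from $X$ and which from $Y$ that the resulting incidence matrix is precisely $H_X$, with a counting argument confirming no edges are missed. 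This is the answer to the appendix title ``Where's the Product?'': the tensor product of the hypergraph construction has been collapsed along the shared $\mathbb{Z}_l$ orbit, and the proof makes that collapse explicit at the level of vertices and edges. Your proposal establishes that the displayed matrices work, but not that they are the balanced product.
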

There are some special cases worth considering. In the absence of a symmetry, i.e.\ $l=1$, then $I_X$ and $I_Y$ act on distinct spaces, and the construction is exactly that of the hypergraph product code \cite{tillich2014}. Taken to the opposite extreme of maximising the intersection of the spaces, we let $n_Y=m_Y=1$. If we then let $I_Y$ be the repetition code $\identity+R$, we recover our claimed form of the balanced product code, \cref{def:balanced_product}.

The commutation relation $H_XH_Z^T=0$ may not be immediately apparent in the above. However, note that, for example, we can write $I_X=\sum_{i,j}M_{ij}\otimes\ket{i}\bra{j}$, so it must be true that $RM_{ij}=M_{ij}R$ for all $i,j$. This commutation relation means that we can rewrite $M_{ij}=f_{ij}(R)$ for some polynomial $f_{ij}(x)$. We thus have that $I_X=\sum_{i,j}f_{ij}(R)\otimes\ket{i}\bra{j}$ and $I_Y=\sum_{i,j}\ket{i}\bra{j}\otimes g_{ij}(R)$. Hence, it must be the case that $$(\identity_{n_Y}\otimes I_X)(I_Y\otimes \identity_{m_X})=(I_Y\otimes \identity_{n_X})(\identity_{m_Y}\otimes I_X),$$ which is all we need.

\begin{proof}
Consider two graphs $X$ and $Y$, both of which have a permutation symmetry so that all vertices/edges partition into orbits of the same length $l$. We can choose to label any vertex (or edge) based on which orbit it is in, and the position in the orbit under the action of some $R$ relative to a representative element of the orbit. We choose to order these labels differently for the two graphs, i.e.\ $X$ is labelled by (shift in orbit,orbit) while $Y$ is labelled by (orbit, shift in orbit). Hence, by construction, we have
\[
(R\otimes\identity_{n_X})I_X=I_X(R\otimes\identity_{m_X}), \qquad
(\identity_{n_Y}\otimes R)I_Y=I_Y(\identity_{m_Y}\otimes R).    
\]
The aim of the balanced product is to combine the two graphs under the Cartesian product, gaining something by reducing the ``shift in orbit'' parts of the space that each has into a single common space.

Specifically, label a set of vertices
$$
z_{a,b,c}=\left\{y_{a,-i}\otimes x_{i+b,c}\right\}_{i=0}^{l-1}
$$
where $y_{a,b}$ is a vertex of $Y$ with an offset of $b$ inside orbit $a$ and $x_{b,c}$ is a vertex of $X$ with an offset of $b$ inside orbit $c$. We say that the new graph $Z$ has an edge if either $X$ had an edge between one of its members (with the same $y$) or its $Y$ had an edge between one of its members with the same $x$. We claim that the incidence matrix of this new graph may be written as $H_X$.

In particular, assume that $X$ has an edge between $x_{a,b}$ and $x_{c,d}$. For every orbit $p$ of the $Y$ vertices, $y_{p,0}\otimes x_{a,b}$ is in the set $z_{p,a,b}$ while $y_{p,0}\otimes x_{c,d}$ is in the set $z_{p,c,d}$. So there must be an edge between $z_{p,a,b}$ and $z_{p,c,d}$ (note that there are actually $l$ edges, as we permute through every element of each set, but we only count it once). Hence, we get a set of edges described in the form $\identity\otimes I_X$ derived from the edges in $X$.

Similarly, assume that $Y$ has an edge between $y_{a,b}$ and $y_{c,d}$. For every orbit $p$ of the $X$ vertices, $y_{a,b}\otimes x_{0,p}$ and $y_{c,d}\otimes x_{0,p}$ must be joined by an edge. Permuting through the orbit, these are in the same $z$ sets as $y_{a,0}\otimes x_{b,p}$ and $y_{c,0}\otimes x_{d,p}$ respectively. Thus, these are members of $z_{a,b,p}$ and $z_{c,d,p}$ respectively, so there must be an edge between these. Overall, this must give a set of edges in the incidence matrix of
$I_Y\otimes\identity$.

Simple counting shows that we must have all the edges. In the full Cartesian product of the graph, we would have $E_XV_Y+E_YV_X$ edges, where the graphs had $E_X$ edges and $V_X$ vertices. When forming the graph $Z$, we have retained $n_XV_Y+n_YV_X=(E_XV_Y+E_YV_X)/l$ edges, having already noted that every edge in this graph corresponds to $l$ in the Cartesian product.
\end{proof}

\end{document}